\title{Time and Space Efficient Quantum Algorithms for Detecting Cycles and Testing Bipartiteness}
\author[1]{Chris Cade\footnote{\href{mailto:me@example.com}{\nolinkurl{chris.cade@bristol.ac.uk}}}}
\author[1]{Ashley Montanaro}
\author[2]{Aleksandrs Belovs}
\affil[1]{School of Mathematics, University of Bristol, UK}
\affil[2]{Faculty of Computing, University of Latvia, Latvia}
\begin{document}
\newtheorem{theorem}{Theorem}
\newtheorem{lemma}{Lemma}
\newtheorem{definition}{Definition}
\newtheorem{claim}{Claim}

\maketitle
\begin{abstract}
We study space and time efficient quantum algorithms for two graph problems -- deciding whether an $n$-vertex graph is a forest, and whether it is bipartite. Via a reduction to the s-t connectivity problem, we describe quantum algorithms for deciding both properties in $\tilde{O}(n^{3/2})$ time and using $O(\log n)$ classical and quantum bits of storage in the adjacency matrix model. We then present quantum algorithms for deciding the two properties in the adjacency array model, which run in time $\tilde{O}(n\sqrt{d_m})$ and also require $O(\log n)$ space, where $d_m$ is the maximum degree of any vertex in the input graph. 
\end{abstract}

\section{Introduction}
Graph-theoretic problems are an important class of problems for which quantum algorithms can be shown to be faster than any possible classical algorithm. Examples of such problems include deciding whether there is a path between two vertices in a graph, or whether a graph is planar. The latter is exemplary of a subclass of graph problems that involve deciding whether or not a graph has a given property, which (besides planarity) includes properties such as containing a triangle, being bipartite, or being a forest. Many such graph properties are known to have efficient quantum algorithms \cite{durr2004quantum, childs_quantum_2012, ambainis2008quantum}. 
\\
\\
However, most of these algorithms use $\Omega(n\log n)$ bits of storage. Two exceptions are the work of Belovs and Reichardt \cite{belovs_span_2012}, who improve on the connectivity algorithm of \cite{durr2004quantum} by providing a time efficient ($\tilde{O}(n^{3/2})$) span-program-based algorithm for s-t connectivity that requires only logarithmic space, as well as \={A}ri\c{n}\v{s} \cite{arins_span-program-based_2015}, who describes a query-efficient ($O(n^{3/2})$) and space efficient ($O(\log n)$), but not time efficient, algorithm for testing bipartiteness. Other than this, however, little is known about the quantum space requirements for graph problems. It is desirable to have space efficient (i.e. $O(\log n)$) as well as time efficient algorithms for solving graph problems, since the graphs that will be good candidates for quantum algorithms are likely to be extremely large, and possibly given implicitly. Therefore reducing the amount of space (both classical and quantum bits) required to process them is important; moreover, it will be interesting to know how the space requirements of quantum algorithms relate to those of their classical counterparts. 
\\
\\
We study space-efficient algorithms for graph problems. In particular, we focus on the property of being a forest -- that is, deciding whether the input graph contains a cycle. We also consider the property of bipartiteness, which is characterised by containing no odd length cycle as a subgraph. The property of being a forest is minor-closed with a single forbidden minor (a triangle), whereas the property of bipartiteness is only subgraph-closed. Equivalently, both properties can be characterised by an infinite number of forbidden subgraphs (all cycles and all odd-length cycles, respectively). 

In this paper we consider two models for the input of a graph $G$ with vertex set $V$ and edge set $E$:
\begin{itemize}
\item {\bf The adjacency matrix model -- }The input is given as the adjacency matrix $A \in \{0,1\}^{n\times n}$, where $A_{ij}$ is 1 if and only if $(i, j) \in E$.
\item {\bf The adjacency array model -- }We are given the degrees of the vertices $d_1, d_2, ..., d_n$ and for every vertex $i$ an array with its neighbours $f_i : [d_i] \rightarrow [n]$, so that $f_i(j)$ returns the $j^{\text{th}}$ neighbour of vertex $i$, according to some arbitrary but fixed numbering of the outgoing edges of $i$. Following D{\" u}rr et al. \cite{durr2004quantum}, we assume that the degrees are provided for free as a part of the input, and we account only for queries to the arrays $f_i$. Moreover, we assume that the graph is not a multigraph (that is, each $f_i$ is injective).
\end{itemize}
We also assume that the input graph is undirected, and therefore the adjacency matrix is taken to be symmetric.

Classically, in the adjacency matrix model, each of the problems requires $\Theta(n^2)$ queries to the input adjacency matrix, since both the randomised and deterministic query complexities of any (non-trivial) subgraph-closed graph property are $\Theta(n^2)$, which can be shown via a reduction from the unstructured search problem \cite{childs_quantum_2012}. Known classical algorithms that achieve this bound are based on breadth first search, and hence require more than logarithmic space; however, by allowing more time the space requirement can be reduced. In particular, by using random walks, Aleliunas et al. \cite{aleliunas_random_1979} provide $O(n^3)$ time and $O(\log n)$ space algorithms for deciding bipartiteness and s-t connectivity. By taking into account a reduction given in this paper, this implies a similar algorithm for detecting arbitrary cycles. 

We describe a bounded-error quantum algorithm in the adjacency matrix model, which, given as input a graph $G$ with vertex set $V$ and edge set $E$, returns some vertex $v \in V$ that is a part of a cycle in $G$ if such a cycle exists, and otherwise returns false. The algorithm runs in time $\tilde{O}(n^{3/2})$ and requires $O(\log n)$ bits and qubits of storage, where the $\tilde{O}$ notation hides poly-logarithmic factors in $n$. Our algorithms are based on quantum walks and can hence be seen as quantum analogues of the approach of Aleliunas et al.~\cite{aleliunas_random_1979}. 

By a simple modification of the original algorithm, we obtain an algorithm that can decide whether or not a graph is bipartite (i.e. contains no odd-length cycles), and has the same time and space requirements. For both problems our algorithms are optimal up to poly-logarithmic factors, almost matching the $\Omega(n^{3/2})$ quantum query lower bounds for bipartiteness \cite{zhang2005power} and cycle detection \cite{childs_quantum_2012}. 

The main new technical ingredient of the algorithms is a reduction from the problem of cycle detection (or odd-length cycle detection, in the case of bipartiteness) to the problem of s-t connectivity in some ancillary graph. We then apply the span-program-based s-t connectivity algorithm of Belovs and Reichardt, introduced in \cite{belovs_span_2012}, to this ancillary graph without explicitly constructing it. Following this, we make use of a variant of Grover search to look for a vertex that makes up a part of a cycle in the input graph. We include a full proof of the efficiency of the s-t connectivity algorithm, which was omitted from \cite{belovs_span_2012}.

We then turn to the adjacency array model. D{\" u}rr et al. prove a tight $\Omega(n)$ quantum query lower bound for the s-t connectivity problem in the array model \cite{durr2004quantum}, which we extend to give a $\Omega(n)$ bound on the problems of deciding bipartiteness and cycle detection in the array model. By combining our reduction to s-t connectivity with the s-t connectivity algorithm in \cite{durr2004quantum}, it is possible to construct algorithms that decide bipartiteness and detect cycles in time $\tilde{O}(n)$. These algorithms are therefore optimal up to poly-logarithmic factors, but require $O(n \log n)$ space \cite{durr2004quantum}. In order to preserve space efficiency, we use a quantum walk based algorithm to decide s-t connectivity, but at the expense of time efficiency. Our quantum walk based algorithm takes time $\tilde{O}(n\sqrt{d_m})$, where $d_m$ is the maximum degree of any vertex in the graph.

\subsection{Previous Work}
D{\" u}rr et al.~\cite{durr2004quantum} previously gave quantum query lower bounds for some graph problems in both the adjacency matrix model and the adjacency array model, and in particular show that the quantum query complexity of testing connectivity between two vertices is $\Theta(n^{3/2})$ in the matrix model. In the following, unless explicitly stated, we will assume that any bounds given are applicable to the adjacency matrix model. Ambainis et al. \cite{ambainis2008quantum} show that planarity also has quantum query complexity $\Theta(n^{3/2})$, and Zhang \cite{zhang2005power} gave a lower bound of $\Omega(n^{3/2})$ for the problems of bipartiteness and perfect matching. More generally, Sun et al. \cite{sun2004graph} showed that all graph properties have quantum query complexity $\Omega(\sqrt{n})$, and gave a non-monotone property for which this lower bound is tight (up to polylogarithmic factors). 

An interesting graph property for which a tight lower bound has not been found is the $H$-subgraph containment problem. In the most general form of this problem, we are asked to determine whether or not the input graph contains the fixed graph $H$ as a subgraph. The best known lower bound for this property is only $\Omega(n)$. A special case is the property of containing a triangle, and the best known lower bound for this problem is also $\Omega(n)$. Le Gall \cite{le2014improved} has described a quantum query algorithm that detects triangles using $O(n^{5/4})$ queries. Under the promise that the input graph either contains a triangle as a subgraph, or does not contain it as a minor (the so-called \emph{subgraph/not-a-minor problem}), Belovs and Reichardt \cite{belovs_span_2012} provide an $O(n)$ quantum query algorithm based on span-programs, which can also be implemented time efficiently. Also under the promise of the subgraph/not-a-minor problem, Wang \cite{wang_span-program-based_2013} gives a span-program-based algorithm capable of detecting a given tree as a subgraph in $\tilde{O}(n)$ time.

Monotone graph properties are those that are subgraph-closed -- that is, every subgraph of a graph with the property also has that property. Likewise, a graph property is minor-closed if every graph minor of a graph with the property also has that property. Over a series of papers, Robertson and Seymour \cite{robertson_graph_2004} showed that all minor-closed graph properties can be described by a finite set of forbidden minors -- graphs that do not appear as a minor of any graph possessing the property. Some minor-closed properties can also be characterised by a finite set of forbidden \emph{subgraphs}.

The widely believed Aanderaa-Karp-Rosenberg conjecture \cite{rosenberg1973time} states that the deterministic and randomised query complexities of all monotone graph properties are $\Theta(n^2)$. Childs and Kothari \cite{childs_quantum_2012} show that all minor-closed properties that cannot be characterised by a finite set of forbidden \emph{subgraphs} have quantum query complexity $\Theta(n^{3/2})$. On the other hand, they show that all minor-closed properties and sparse graph properties that can be characterised by finitely many forbidden subgraphs can be determined in $o(n^{3/2})$ queries. Reichardt and Belovs \cite{belovs_span_2012} extended this result to show that any minor-closed property that can be characterised by exactly one forbidden subgraph, which must necessarily be a path or a (subdivided) claw, has query complexity $O(n)$. 
\\
\\
Some of these previous results can be applied to finding cycles. In particular, Childs and Kothari~\cite{childs_quantum_2012} describe a method that can be used to reject graphs with more than $n$ edges in time $\tilde{O}(\sqrt{n})$. Since an $n$-vertex graph with more than $n$ edges must necessarily contain a cycle, we can immediately dismiss these cases in $\tilde{O}(\sqrt{n})$ time. The graphs that are not rejected are then guaranteed to have fewer than $n$ edges, which can be reconstructed using $O(n^{3/2})$ queries to the input adjacency matrix. Now we have (the adjacency matrix of) a graph that is promised to have fewer than $n$ edges. Under this promise, even running a classical algorithm such as breadth first search can determine whether or not this graph contains a cycle in $O(n)$ time. Overall, the process takes time $\tilde{O}(n^{3/2})$. However, in order to reconstruct the edges of the graph, we require coherently addressable access to $O(n \log n)$ classical bits/qubits.

To decide whether or not a graph is bipartite, \={A}ri\c{n}\v{s} \cite{arins_span-program-based_2015} designed a span program that gives rise to an (optimal) $O(n^{3/2})$ quantum query algorithm. Our algorithm is inspired by this span program, and makes use of the s-t connectivity span program of Belovs and Reichardt as a sub-routine in a similar manner. We also provide a time (and space) efficient implementation. 

Piddock \cite{piddock} described a span-program-based quantum query algorithm for detecting cycles of constant fixed length, subject to the subgraph/not-a-minor promise, which requires $O(n^{3/2})$ queries to the input for odd length cycles, and $O(n)$ queries for even length cycles.   This is in contrast to the present work, which detects cycles of arbitrary length (with no promise on the input). Within the adjacency array model, D{\" u}rr et al.~\cite{durr2004quantum} suggest a quantum query algorithm for deciding bipartiteness in $O(n)$ queries, and using $O(n \log n)$ space. An example of a reduction to s-t connectivity given in terms of span programs is the work of Jeffery and Kimmel~\cite{jeffery2015nand}, in which the problem of evaluating {\sc nand}-trees is reduced to the problem of s-t connectivity on certain graphs.

The algorithms presented in this paper achieve optimal time complexity up to poly-logarithmic factors, but require only $O(\log n)$ classical and quantum bits of storage. Thus, we emphasise that our algorithms are also space efficient, with respect to the number of classical and quantum bits of storage that they require. We assume that we have access to quantum RAM (QRAM \cite{giovannetti2008quantum}), and that we use this for storage. We assume that we are given access to an oracle that lets us evaluate edges of $G$, and which isn't counted against the space bound. Therefore our measure of space efficiency differs somewhat to other, alternative definitions: for example, in investigating the computational power of space-bounded quantum Turing machines, Watrous \cite{watrous1999quantum, watrous1999space} measured the space requirements of the quantum (and classical) Turing machines in terms of the number of bits required to encode certain information regarding configurations of these machines. Instead, we consider the size of the QRAM required for our algorithms to run.

\subsection{Organisation}
We begin by introducing some useful results and background material in section \ref{sec:preliminaries}. In section \ref{sec:reduction}, we present a reduction of the problem of cycle detection in a graph $G$ to the problem of s-t connectivity in some ancillary graph that is constructed from $G$, which is the main ingredient for the algorithms that follow. Section \ref{sec:algo} presents a randomised algorithm for deciding whether a given vertex in a graph is a part of a cycle, and discusses the probability with which this algorithm fails. Section \ref{sec:grover} describes a more general algorithm that allows the detection of arbitrary cycles, and then section \ref{sec:bipartite} explains how to use a modified version of this algorithm to decide whether or not a graph is bipartite. Finally, section \ref{sec:adj_array} discusses how to obtain an efficient algorithm in the adjacency array model, by using a quantum walk in place of the span-program-based s-t connectivity algorithm used in the previous sections. 

Appendices \ref{sec:span_programs_main} through \ref{sec:impl_sp} describe the span-program-based s-t connectivity algorithm of Belovs and Reichardt \cite{belovs_span_2012}, which is crucial for our results and introduced in section \ref{sec:preliminaries}. We include a proof of its correctness and time and space complexity, the details of which were omitted from \cite{belovs_span_2012}. Appendix \ref{sec:span_programs_main} briefly introduces span programs, and then Appendix \ref{sec:stspan} presents a span program for the problem of s-t connectivity. Appendix \ref{sec:impl_sp} describes a general method for implementing span programs time efficiently, due to Belovs and Reichardt, and then applies this method to the span program for s-t connectivity (following the approach in \cite{belovs_span_2012} very closely). Finally, Appendix \ref{app:diff_ops} details the implementation of the operations required for the quantum walk algorithm for s-t connectivity in the adjacency array model. 

\section{Preliminaries}\label{sec:preliminaries}

We will make use of the following result of Belovs and Reichardt \cite{belovs_span_2012}:

\begin{restatable}{theorem}{therbelovs}[Combination of Theorems 3 and 9 from \cite{belovs_span_2012}]
\label{ther:belovs}
Consider the st-connectivity problem on a graph G given by its adjacency matrix. Assume there is a promise that if s and t are connected by a path, then they are connected by a path of length at most $d$. Then there exists a bounded-error quantum algorithm that determines whether $s$ and $t$ are connected in $\tilde{O}(n\sqrt{d})$ time and uses $O(\log n)$ bits and qubits of storage, and which fails with probability at most $1/10$. 
\end{restatable}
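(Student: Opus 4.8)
The plan is to instantiate the general span-program-to-algorithm compiler with the canonical span program for st-connectivity, bound its witness sizes via effective resistance and effective capacitance, and then show that the resulting phase-estimation algorithm can be implemented with only $O(\log n)$ qubits and $\tilde{O}(n\sqrt{d})$ elementary operations.

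First I would set up the st-connectivity span program $\mathcal{P}$ on the complete graph $K_n$: for every unordered pair $\{u,w\}$ introduce the input variable ``$\{u,w\}\in E$'', and for each ordered pair $(u,w)$ take a free vector $\ket{u}-\ket{w}$ that is available precisely when $\{u,w\}\in E$, with target vector $\ket{s}-\ket{t}$. Standard facts about this span program give that it evaluates to $1$ exactly when $s$ and $t$ lie in the same component of $G$; that its positive witness size is the minimum energy of a unit $s$--$t$ flow supported on the present edges, which --- routing all the flow along one shortest path --- is at most the shortest-path length, hence at most $d$ under the promise; and that its negative witness size is the minimum, over vertex potentials $p$ with $p_s-p_t=1$ that are constant on every present edge, of $\sum_{\{u,w\}}(p_u-p_w)^2$, which --- taking $p$ to be the $0/1$ indicator of $s$'s component --- is at most the number of $K_n$-edges across that cut, i.e.\ $O(n^2)$. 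Plugging $w_+\le d$ and $w_-=O(n^2)$ into the span program algorithm yields a bounded-error algorithm making $O(\sqrt{w_+w_-})=\tilde{O}(n\sqrt{d})$ oracle calls.

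Second I would make this time- and space-efficient. The span program algorithm is phase estimation, to precision $\Theta(1/\sqrt{w_+w_-})$, of $U=R_{\ker A}\,R_{H(x)}$ on a fixed initial state, accepting iff a phase near $0$ is observed; here $R_{H(x)}$ reflects about the available subspace and $R_{\ker A}$ reflects about the kernel of the input-independent span-program matrix $A$. Following Belovs and Reichardt, for $\mathcal{P}$ this unitary is --- up to a harmless direct summand --- a discrete-time quantum walk on $K_n$ with a suitably weighted auxiliary $s$--$t$ edge: the working register is the span of the $O(n^2)$ oriented edges, which fits on $2\lceil\log_2 n\rceil+O(1)$ qubits; $R_{H(x)}$ is diagonal in the oriented-edge basis with each entry computed by a single oracle call; and, after splitting each edge into its two orientations, $R_{\ker A}$ reduces to an orientation swap composed with a direct sum over vertices $v$ of a Grover-type reflection on the $n-1$ edges with tail $v$, each such reflection being a weighted state preparation over at most $n-1$ labels, a multiply-controlled phase, and the inverse preparation, all in $\tilde{O}(1)$ time and $O(\log n)$ qubits. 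Since phase estimation to the stated precision needs $\tilde{O}(\sqrt{w_+w_-})=\tilde{O}(n\sqrt{d})$ applications of $U$ and only $O(\log(n\sqrt{d}))=O(\log n)$ phase qubits, the whole procedure runs in $\tilde{O}(n\sqrt{d})$ time and $O(\log n)$ space; a constant number of repetitions and a majority vote force the failure probability below $1/10$.

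The step I expect to be the main obstacle is exactly this implementation of $R_{\ker A}$ --- turning the abstract ``reflect about $\ker A$'' into an explicit logarithmic-space circuit. Making it rigorous requires choosing the edge weights of $K_n$ (in particular the weight of the auxiliary $s$--$t$ edge) so that the relevant phase gap of $U$ near $0$ is $\Omega(1/\sqrt{w_+w_-})$, which matches the phase-estimation precision and produces the clean dichotomy between the connected and disconnected cases, and then checking that the local star-reflections, the coherent enumeration of a vertex's neighbours, the oracle interaction, and the reversible uncomputation of all scratch space together fit into $O(\log n)$ qubits with only poly-logarithmic time overhead. The effective-resistance and effective-capacitance bounds on the witness sizes are the easy, conceptual part; the implementation bookkeeping --- confirming in particular that all arithmetic on vertex labels and all oracle queries fit in $O(\log n)$ space --- is where the real work lies, and it is deferred to Appendices~\ref{sec:span_programs_main}--\ref{sec:impl_sp}.
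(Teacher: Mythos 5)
Your overall route is the same as the paper's: the Belovs--Reichardt st-connectivity span program with positive witness size $O(d)$ (unit flow along a shortest path) and negative witness size $O(n^2)$ (cut-indicator potential), compiled into phase estimation on the product of an input-dependent reflection and the input-independent reflection about the kernel of the span-program matrix, with the latter realised through per-vertex Grover-type reflections, a (negated) orientation swap, and a weighted never-available auxiliary $s$--$t$ vector, all in $O(\log n)$ qubits. However, two steps in your implementation sketch are stated in a way that would fail if taken literally.

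First, $R_{\ker A}$ does not ``reduce to an orientation swap composed with a direct sum over vertices of Grover-type reflections''. In the paper one factors the rescaled span-program matrix as $A^\dagger B = M'$, and the product $R_BR_A$ of those two cheap reflections is \emph{not} the reflection about $\ker M$: by the Spectral Lemma, $B(\ker M)$ sits inside the $-1$ eigenspace of $R_BR_A$, so $R_\Lambda$ must be implemented as a reflection about that eigenspace, which requires a further layer of phase estimation (Theorem \ref{theo:phase_reflect}) whose cost is governed by the singular-value gap of $M'$ around $0$. The paper computes $\Delta = M'M'^\dagger$ explicitly --- it is $\frac{1}{2(n-1)}$ times the Laplacian of $K_n$, so its eigenvalues are $0$ and $n/(2(n-1))$ --- giving a constant gap, which is exactly what makes each application of $R_\Lambda$ cost only polylogarithmic time. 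Second, the outer analysis neither needs nor establishes a phase gap of $U$ of order $1/\sqrt{w_+w_-}$, and the auxiliary edge weight is not chosen to create one. In the positive case one exhibits the $1$-eigenvector $\alpha\ket{0}-\ket{w}$ built from the positive witness and lower-bounds its overlap with $\ket{0}$; in the negative case one applies the Effective Spectral Gap Lemma to $\ket{v}=\alpha\tilde{M}^\dagger\ket{w'}$, which bounds the mass on phases below $\Theta \approx 1/W$ \emph{without} assuming any actual gap of $U$. The weight $\sqrt{1-1/\alpha^2}$ on the never-available vector is there so that the factorisation into unit vectors rescales all rows uniformly while leaving the witness sizes unchanged; if you instead set out to prove a genuine $\Omega(1/(n\sqrt{d}))$ phase gap for $U$, you would be proving something both stronger than needed and not something the span-program framework guarantees.
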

The proof of this theorem can be found in Appendix \ref{app:stconn}.\\
\\
We will also require some facts about \emph{k-wise independent hash functions}:
\begin{definition}[\cite{mitzenmacher2005probability}]
Let $U$ be a universe with $|U| \geq n$ and let $V = \{0,1,...,n-1\}$. A family of hash functions $\mathcal{H}$ from $U$ to $V$ is said to be \emph{strongly k-universal} if, for any elements $x_1, x_2, ..., x_k \in U$, any values $y_1, y_2, ..., y_k \in V$, and a hash function $h$ chosen uniformly at random from $\mathcal{H}$, we have
\[
\Pr[(h(x_1) = y_1) \cap (h(x_2) = y_2) \cap \cdots \cap (h(x_k) = y_k)] = \frac{1}{n^k}.
\]
\end{definition}
We will be interested in the case where $k=2, n=2$. In this case, the values $h(x_1), h(x_2)$ are pairwise independent, since the probability that they take on any pair of values is $\frac{1}{n^2} = \frac{1}{4}$. The simplest construction, which suffices for our purposes, is to use functions $h: \{0,1\}^m \rightarrow \{0,1\}$ of the form $h(x) = \braket{a,x} + b \mod 2$, where $\braket{a,x} = \sum_{i=1}^m a_ix_i \mod 2$ \cite{mitzenmacher2005probability}. Each function is parameterised by two values $a \in \{0,1\}^m, b \in \{0,1\}$. To achieve pairwise independence of the values $h(x_1), h(x_2)$ for $x_1, x_2 \in \{0,1\}^m$, we therefore require $m+1$ truly random bits to specify $a$ and $b$. Doing so gives us $N = 2^m$ pairwise independent `random' bits. 

To use a hash function to assign a value in $\{0,1\}$ to each of $N$ elements, we require $O(\log N)$ bits to specify the hash function $h$, from which $h(x)$ can be calculated in $O(\log^2 N)$ time \cite{mitzenmacher2005probability}.

Throughout the paper, we will use $[n] := \{1,...,n\}$ to denote the integers from $1$ to $n$.

\section{Reduction of Cycle Detection to s-t Connectivity}\label{sec:reduction}
Let $G = (V,E)$ be a connected, undirected graph on $n$ vertices. Fix some arbitrary orientation of the edges $(u,v) \in E$ by directing edges from $u \rightarrow v$ if $v>u$, $v \rightarrow u$ otherwise. $G$ is now a directed graph. 

Now consider an ancillary graph $H = (V',E')$, where $V' = \{s,t\} \cup \{v_b : v \in V, b \in \{0,1,2\}\}$ and $E' = \{(u_b,v_{b+1} \mod 3) : (u,v) \in E, b \in \{0,1,2\}\} \cup \{(s,k_0), (t,k_1)\}$ for some $k \in V$. Intuitively, we split each vertex $v \in V$ into three vertices $v_0, v_1$, and $v_2$. Then, for each (directed) edge $(u,v) \in E$, we create three edges $(u_0,v_1), (u_1,v_2)$, and $(u_2,v_0)$ in $H$. Finally, we add an edge between $s$ and $k_0$ and between $t$ and $k_1$, for some arbitrarily chosen vertex $k$. 

%

To analyse the reduction to s-t connectivity, we introduce the notion of `clockwise' and `anticlockwise' edges. Given an undirected cycle, fix an arbitrary vertex $v$ in the cycle that has at least 1 outgoing edge that makes up a part of the cycle (it is easy to verify that such a vertex must exist). Starting with one of the outgoing edges, we traverse the cycle from $v$ back to $v$. Any edge that is oriented in the direction of traversal is defined as clockwise, and any edge oriented against the direction of traversal is defined as anticlockwise. Figure \ref{fig:H_example} provides an example to illustrate the notion of clockwise and anticlockwise edges, and gives two examples of the form of the graph H constructed from a cycle on 4 vertices, showing how the reduction fails when the number of clockwise and anticlockwise edges is equal modulo 3. 


\begin{figure}[htbp]
\begin{center}
\includegraphics[scale = 0.575]{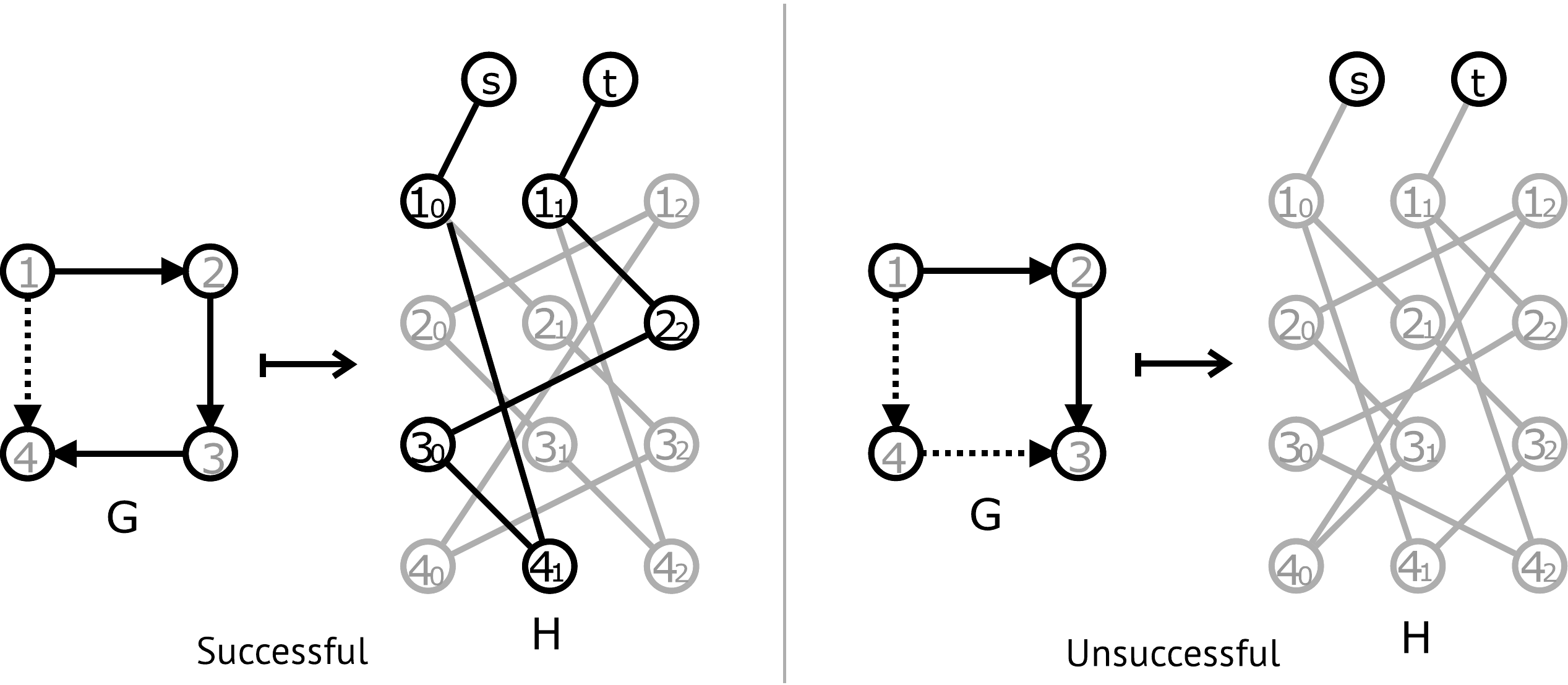}
\caption{A successful (left) and unsuccessful (right) reduction to s-t connectivity for $G=C_4$, a cycle on 4 vertices, with clockwise and anticlockwise edges represented by solid and dashed arrows, respectively}
\label{fig:H_example}
\end{center}
\end{figure}

We now prove the following lemma:

\begin{lemma}\label{lem:reduction}
Let $G=(V,E)$ be a connected undirected graph, and let $H = (V',E')$ be defined as above. Then there is a path from $s$ to $t$ in $H$ if and only if there is a cycle present in $G$, such that the difference $D := p - q$ between the number of clockwise edges $p$ and anticlockwise edges $q$ satisfies $D \not\equiv 0 \mod 3$. Furthermore, if $k$ is chosen to be a vertex on the cycle, then the length of the path between $s$ and $t$ is at most $2c+2$, where $c$ is the length of the cycle. 
\end{lemma}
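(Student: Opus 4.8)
The plan is to move back and forth between paths from $s$ to $t$ in $H$ and closed walks in $G$ based at $k$, tracking a ``layer index'' in $\{0,1,2\}$ that is additive modulo $3$. Since $s$ and $t$ each have degree one in $H$ (incident only to $(s,k_0)$ and $(t,k_1)$), every path from $s$ to $t$ has the shape $s,k_0,\dots,k_1,t$; deleting the two end edges and forgetting the subscript on each $v_b$ projects it to a closed walk $\bar W=(k=w_0,w_1,\dots,w_\ell=k)$ in $G$, and conversely any closed walk in $G$ from $k$ to $k$ lifts to a walk in $H$ once a starting layer is chosen. The one piece of bookkeeping I need is that traversing the $H$-edge coming from an oriented edge $(u,v)\in E$ from $u_b$ to $v_{b+1}$ increments the layer mod $3$, while traversing it from $v_{b+1}$ to $u_b$ decrements it; hence the net layer change along $\bar W$ is $P-Q$ modulo $3$, where $P$ and $Q$ count the steps taken along, respectively against, the fixed orientation. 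In particular a walk that lifts $k_0$ to $k_1$ satisfies $P-Q\equiv 1\pmod 3$, and for a simple cycle this signed count, traversed in either of the two directions, is exactly $\pm D$; so ``$P-Q\not\equiv 0\pmod 3$'' is the same condition as ``$D\not\equiv 0\pmod 3$''.

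For the forward direction, a path from $s$ to $t$ hands me a closed walk $\bar W$ based at $k$ with layer change $\equiv 1 \pmod 3$, and I would extract the desired cycle from it by induction on the length $\ell$: if $\bar W$ has no repeated vertex it is a simple cycle and we are done; otherwise a repeated vertex splits $\bar W$ into two strictly shorter closed walks whose layer changes add as integers, so at least one of them still has layer change $\not\equiv 0 \pmod 3$, and we recurse on that one. The recursion never bottoms out at a trivial walk, because an empty walk or a single ``out-and-back'' traversal of one edge has layer change $0$; so it terminates at a simple cycle $C$ of $G$ with $D\not\equiv 0 \pmod 3$.

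For the reverse direction, suppose $C$ is a cycle of length $c$ with $D\not\equiv 0 \pmod 3$. Lifting one traversal of $C$ shifts the layer by $D$, so lifting $m$ consecutive traversals shifts it by $mD$; since $D$ is invertible modulo $3$ we may pick $m\in\{1,2\}$ with $mD\equiv 1\pmod 3$. If $k$ lies on $C$ (the ``furthermore'' case), start the lift at $k_0$, run around $C$ a total of $m$ times to arrive at $k_1$, and prepend the edge $(s,k_0)$ and append $(k_1,t)$: the result is a walk from $s$ to $t$ of length $mc+2\le 2c+2$, and hence a path of at most that length. For a general $k$, invoke connectivity of $G$: take any walk from $k$ to a vertex $w$ of $C$, lift it, loop $m$ times around $C$, then lift the reversed walk back to $k$ (the two passes over the $k$--$w$ walk cancel in the layer, leaving net shift $mD\equiv 1$), and cap with the $s$- and $t$-edges as before; this is a walk from $s$ to $t$, so $s$ and $t$ are connected in $H$.

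The only step with any bite is the cycle extraction in the forward direction: a walk taking $k_0$ to $k_1$ need not itself be a simple cycle, and one has to be sure the obstruction $D\not\equiv 0\pmod 3$ is inherited by one of the pieces when the walk is cut at a repeated vertex -- which is clear once the layer change is observed to be additive over concatenation of walks. Everything else is just the mod-$3$ accounting set up in the first paragraph.
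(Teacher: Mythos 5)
Your proof is correct, and it diverges from the paper's in an interesting way on one of the two directions. For ``cycle $\Rightarrow$ path'' you do essentially what the paper does --- lift traversals of the cycle into $H$, reach the cycle from $k$ via connectivity, and cap with the $s$- and $t$-edges --- except that you fix $m\in\{1,2\}$ with $mD\equiv 1\pmod{3}$ in advance instead of arguing about which layer one lands in after one or two traversals, and you reach $k_1$ by going out and back along the connecting walk rather than lifting it twice at different starting layers; the $mc+2\le 2c+2$ bound for $k$ on the cycle comes out the same. The genuine difference is in ``path $\Rightarrow$ cycle''. The paper argues the contrapositive by cases: if $G$ is acyclic a projected $s$--$t$ path would produce a cycle, and if a cycle has $D\equiv 0\pmod{3}$ its lift splits into three disjoint $c$-cycles, so layers are not connected within that lift. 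You instead project the $s$--$t$ path to a closed walk based at $k$, note that the signed layer change is additive under concatenation and equals $1\pmod{3}$ for this walk, and extract a simple cycle with nonzero change by repeatedly splitting at a repeated vertex, ruling out the degenerate terminal cases (empty and out-and-back walks) because their change is $0$. This invariant-based extraction is self-contained and treats arbitrary $s$--$t$ paths in $H$ --- including ones passing through lifts of several cycles or leaving and re-entering a cycle --- uniformly, something the paper's disjoint-lift case analysis only addresses implicitly; the paper's version, in exchange, exhibits the explicit $3c$-cycle structure of $H$ above a cycle of $G$, which it then reuses for the $2c$ length estimate.
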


\begin{proof}
First we show that if there is a cycle in $G$ such that $D \not\equiv 0 \mod 3$, then there is also a path from $s$ to $t$ in $H$. We will assume that the vertices of $G$ are labelled arbitrarily by the integers $1..n$, and (without loss of generality) that $k=1$. To find a path from $s$ to $t$, it suffices to find a path from $1_0$ to $1_1$. We assume that the edges of the cycle are oriented in such a way that $D \not\equiv 0 \mod 3$. It is useful to recall that all edges in $H$ are of the form $(u_b, v_{b+1 \mod 3})$, for $b \in \{0,1,2\}$, and such an edge only exists if the edge $(u,v)$ is present in $G$. Suppose that the cycle is of length $c$, and is composed of the vertices $2,3,4,...,c,2$, where each vertex label is arbitrary. First we show that this implies that there is a path from $2_0$ to $2_1$ in $H$. In fact, we prove something stronger: that there must exist a cycle of length $3c$ in $H$ that contains all vertices $i_b$ for $i \in \{2,...,c\}, b\in \{0,1,2\}$. 

To see this, suppose that, starting at the vertex $2_0$ in $H$, we follow the edges of $H$ that correspond to the edges of the cycle in $G$. Depending on the orientation of the first edge, we first move to either vertex $3_1$ (if the edge is directed $2\rightarrow 3$) or $3_2$ (if the edge is directed $3\rightarrow 2$). In general, at each step, we move from a vertex $u_d$ to a vertex $u_{d\pm1 \mod 3}$, where the clockwise edges add 1 to the value of $d$, and the anticlockwise edges add $-1$. We will refer to the value of $d$ as the `parity' of the vertex. After taking $c$ steps, we will have traversed $p$ clockwise edges and $q$ anticlockwise edges, and so we will arrive at vertex $2_{p-q \mod 3}$. If $p - q = D \not\equiv 0 \mod 3$, then we must be at either $2_1$ or $2_2$, depending on the value of $D$. By traversing the cycle again, we arrive at vertex $2_{2D \mod 3}$. Traversing the cycle one final time, we arrive at $2_{3D \mod 3} = 2_0$. Since $2D \not\equiv D \mod 3$, unless $D \equiv 0 \mod 3$, vertices $2_{D \mod 3}$ and $2_{2D \mod 3}$ are distinct. Therefore, we have a path from $2_0$ to $2_b$ for some $b \in \{1,2\}$, from $2_b$ to $2_{b'}$ for $b' \in \{1,2\}\setminus\{b\}$, and from $2_{b'}$ to $2_0$. Each path must necessarily be disjoint, since each traversal around the edges of the cycle in $G$ adds the same sequence of $+1$s and $-1$s to the parity, and therefore starting with a different initial parity ensures a unique path through the vertices of $H$. Since we have 3 disjoint paths of length $c$, combining them gives us a cycle of length $3c$ that includes all vertices in $H$ corresponding to vertices in $G$ that make up the cycle. 

A straightforward consequence of this is that there exists a path from $2_b$ to $2_{b'}$ in $H$ if there exists a cycle in $G$ containing the vertex $2$, for $b \neq b' \in \{0,1,2\}$. Since $G$ is connected, there must be a path from $1$ to $2$ in $G$. By following the edges of this path, we can find a corresponding path in $H$ from $1_0$ to $2_b$ and from $1_1$ to $2_{b+1 \mod 3}$, for some $b \in \{0,1,2\}$. Since there exists a path from $2_b$ to $2_{b+1\mod 3}$ in $H$, there must also exist a path from $1_0$ to $1_1$. 

The length of this path will depend upon two things: the length of the shortest path from $1$ to $2$ in $G$, and the length of the cycle in $G$. The former is determined by the length the path from $1$ to $2$. In particular, following the edges of $H$ that correspond to this path in $G$ will lead paths of length  from $1_0$ to $2_b$, and from $1_1$ to $2_{b'}$, for $b\neq b' \in \{0,1,2\}$. The length of the path from $2_b$ to $2_{b'}$ in $H$ is then at most $2c$. To see this, consider traversing the edges of $H$ corresponding to the cycle in $G$, starting at $2_b$. As argued above, after $c$ steps we will arrive at a vertex $2_{\tilde{b}}$, for $\tilde{b}\neq b \in \{0,1,2\}$. If $\tilde{b} = b'$, then the length of the path is $c$. On the other hand, if $\tilde{b} \neq b'$ then we can take $c$ more steps, at which point we will arrive at $2_{b'}$ after a total of $2c$ steps. In order to prove the final part of the lemma, we note that when the vertex $k$ (which we have assumed without loss of generality is the vertex 2 in this case) is contained in the cycle, then the path from $s$ to $t$ will be determined only by the length of the path from $2_0$ to $2_1$. By the arguments given here, this is at most of length $2c$. Adding in the two edges incident to vertices $s$ and $t$, we obtain the upper bound of $2c+2$. 
\newline
\\
We will now show that if $G$ does not contain a cycle, or if it contains a cycle such that $D = p-q  \equiv 0 \mod 3$, then $s$ and $t$ are not connected in $H$. 

Assume that $G$ does not contain a cycle at all. Suppose that there is a path $P$ from $v_0$ to $v_1$ in $H$, of the form $v_0, v'_{b}, v''_{b'}, ..., v_1$. Since, for every edge $(u_b, v_{b'}) \in E'$ there must be a corresponding edge $(u,v) \in E$, we can construct a path $Q=v, v', v'', ..., v$ in $G$ from $P$. However, this gives a cycle in $G$, and hence, a contradiction.

Suppose instead that there is a cycle of length $c$ in $G$ such that $D \equiv 0 \mod 3$. We have shown that this implies the existence of a path of length $c$ from $v_b$ back to $v_b$ for any vertex $v$ in the cycle and for all $b \in \{0,1,2\}$. Since each such cycle must be disjoint (by the same argument as before), there cannot be a path from any $v_b$ to $v_{b'}$ for $b \neq b'$, else the cycles would necessarily share vertices. 
\end{proof}

Lemma \ref{lem:reduction} shows us that the problem of cycle detection on a graph $G$ reduces to the problem of s-t connectivity on some ancillary graph $H$. Therefore, if we can test for s-t connectivity efficiently on the graph $H$, then we can also test efficiently for cycles in $G$. However, this reduction fails when the input graph is such that the number of clockwise edges equals the number of anticlockwise edges modulo 3. The next section will discuss a randomised algorithm that deals with this case.

\section{Algorithm for Cycle Detection}\label{sec:algo}
In this section, we describe an algorithm that makes use of both the reduction of cycle detection to s-t connectivity, and the s-t connectivity algorithm of Belovs and Reichardt. In particular, we prove:

\begin{theorem}\label{theo:main_algo}
There exists a quantum algorithm which, given as input a graph $G=(V,E)$, a vertex $k\in V$, and an integer $d$, outputs true with probability $\geq 9/20$ if $G$ contains a cycle of length $l \leq d$ that includes $k$, and returns false with probability $\geq 9/10$ if it does not contain any cycle. The algorithm takes $\tilde{O}(n\sqrt{d})$ time and requires $O(\log n)$ space. 
\end{theorem}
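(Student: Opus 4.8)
The plan is to instantiate the reduction of Lemma~\ref{lem:reduction} with a \emph{randomly chosen} orientation of $G$, and then feed the resulting ancillary graph --- which is never written down explicitly --- to the space-efficient s-t connectivity algorithm of Theorem~\ref{ther:belovs}. Concretely, I would first sample a hash function $h$ from a suitable limited-independence family, described by $O(\log n)$ stored bits, and orient each edge $e$ of $G$ according to the bit $h(e)$ (relative to the canonical orientation from the vertex labels); given $h$, the direction of any edge is computable in $\tilde O(1)$ time. This defines the graph $H=(V',E')$ of Section~\ref{sec:reduction}, using the given vertex $k$ for the edges $(s,k_0)$ and $(t,k_1)$. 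Note that $|V'| = 3n+2 = O(n)$, and a single entry $H_{u_b,v_{b'}}$ of its adjacency matrix can be evaluated using one query to the adjacency matrix of $G$ together with one evaluation of $h$, i.e. in $\tilde O(1)$ time and $O(\log n)$ space. I would then run the algorithm of Theorem~\ref{ther:belovs} on $H$ with its promise parameter set to $2d+2$, and output \textbf{true} exactly when it reports that $s$ and $t$ are connected.

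Correctness of the ``no cycle'' case is immediate: if $G$ is acyclic then, for \emph{every} orientation, $s$ and $t$ are disconnected in $H$ by Lemma~\ref{lem:reduction}, the promise of Theorem~\ref{ther:belovs} holds vacuously, and so the subroutine outputs ``disconnected'' --- hence \textbf{false} --- with probability at least $9/10$. For the other case, suppose $G$ contains a cycle $C$ of length $l \le d$ through $k$. The crucial probabilistic statement is that, over the choice of $h$, the difference $D = p-q$ between the numbers of clockwise and anticlockwise edges of $C$ satisfies $D \not\equiv 0 \bmod 3$ with probability at least $1/2$. Conditioned on this event, Lemma~\ref{lem:reduction} (in the case where $k$ lies on the cycle, so no routing inside $G$ is needed and connectedness of $G$ is irrelevant) provides an $s$--$t$ path in $H$ of length at most $2l+2 \le 2d+2$, so the promise of Theorem~\ref{ther:belovs} is met and the subroutine outputs ``connected'' with probability at least $9/10$; combining the two gives success probability at least $\tfrac{1}{2}\cdot\tfrac{9}{10} = \tfrac{9}{20}$. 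The running time is $\tilde O(|V'|\sqrt{2d+2}) = \tilde O(n\sqrt d)$ and the total space is $O(\log n)$, namely $O(\log n)$ for $h$ plus $O(\log n)$ for the subroutine of Theorem~\ref{ther:belovs}.

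The main obstacle is establishing the probabilistic claim with a hash function whose description fits in $O(\log n)$ bits. Writing $c_1,\dots,c_l \in \{+1,-1\}$ for the clockwise/anticlockwise indicators of $C$ under the random orientation, flipping the orientation bit of an edge flips the corresponding $c_i$, so each $c_i$ is uniform and $D = \sum_i c_i$. If the $c_i$ were \emph{fully} independent, a short roots-of-unity filter gives $\Pr[\sum_i c_i \equiv 0 \bmod 3] = \tfrac13 + \tfrac23(-\tfrac12)^l \le \tfrac38$ for every $l \ge 3$, which is more than enough. The work is to secure the same conclusion --- or at least one strong enough to yield the $\tfrac{9}{20}$ bound --- from the limited independence actually available, since pairwise independence by itself does \emph{not} pin down the distribution of $D$ modulo $3$ (for example there is a pairwise-independent distribution on $\{\pm1\}^4$ with uniform marginals for which $\Pr[\sum_i c_i = 0] = \tfrac34$). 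I would try to handle this by exploiting the specific structure of the affine hash family of Section~\ref{sec:preliminaries}, or, failing that, by running a constant number of independent copies of the whole procedure with fresh orientations after first amplifying the success probability of Theorem~\ref{ther:belovs}, so that an OR of the outcomes still meets the stated error bounds.
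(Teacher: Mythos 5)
There is a genuine gap, and it sits exactly where you flag it: the probabilistic claim that $D \not\equiv 0 \bmod 3$ with probability at least $1/2$ under a random orientation specified by $O(\log n)$ bits. Your scheme randomizes the orientation of \emph{every} edge of the cycle, so $D$ is a sum of $l$ random $\pm 1$'s and you need an $l$-wise statement about its distribution modulo $3$; as your own counterexample shows, pairwise independence does not control this, and you do not establish it for the affine family either (its XOR structure makes the signs along a cycle heavily correlated, e.g.\ the bit $b$ only flips $D$ to $-D$, which never changes whether $D \equiv 0 \bmod 3$). The proposed fallback of running a constant number of copies with fresh hash functions does not repair this: each copy has the same single-trial success probability $p$, which is precisely the quantity left unbounded, so an OR of constantly many trials gives no constant guarantee unless $p$ is already bounded below by a constant. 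So as written the positive case of the theorem is not proved; only the no-cycle case and the time/space accounting go through.

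The paper avoids this by localizing the randomness instead of randomizing the whole orientation: it keeps the canonical orientation everywhere except at the distinguished vertex $k$, colours the vertices by a pairwise independent hash $h:[n]\to\{0,1\}$, and flips only the edges incident to $k$ whose other endpoint is coloured $1$. Since $k$ lies on the cycle, only the colours of its two cycle-neighbours $a,b$ matter, and pairwise independence already gives $\Pr[h(a)\neq h(b)]=1/2$; flipping exactly one of the two cycle edges at $k$ changes $D$ by $\pm 2$, which turns $D\equiv 0 \bmod 3$ into $D\not\equiv 0 \bmod 3$ with certainty. This is the missing idea: reduce the relevant randomness to two fixed edges so that two-wise independence is provably enough, while still using only $O(\log n)$ stored random bits, and then combine with Lemma~\ref{lem:reduction} and Theorem~\ref{ther:belovs} exactly as you do (the flip is folded into the $\tilde{O}(1)$-time, $O(\log n)$-space oracle map from $H$ to $G$). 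If you want to keep your global-orientation variant, you would need to prove the mod-$3$ anticoncentration for a concrete $O(\log n)$-bit family (for instance one with higher, e.g.\ $O(\log n)$-wise or $3$-wise-plus, independence and an explicit roots-of-unity estimate), which is a substantively harder route than the paper's local flip.
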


\begin{proof}
By Lemma \ref{lem:reduction}, the problem of detecting if a vertex $k$ is included in a cycle reduces to the problem of s-t connectivity on an ancillary graph $H$, which is constructed from $G$. By Theorem \ref{ther:belovs}, we can test for s-t connectivity in an $n$ vertex graph in $\tilde{O}(n\sqrt{l})$ time and $O(\log n)$ space, where $l$ is an upper bound on the length of the path connecting $s$ and $t$. Lemma \ref{lem:reduction} also states that, if $k$ is contained within a cycle of length $d$, then there is a path from $s$ to $t$ in $H$ of length at most $2d+2$. It is worth noting that Lemma \ref{lem:reduction} actually gives a stronger result -- that there is a path from $s$ to $t$ in $H$ when there exists some path from $k$ to a cycle in $G$, provided that the cycle satisfies some constraints on the orientations of its edges. However, as we will see, for some inputs, the algorithm could fail to detect such cases with certainty. Thus, we restrict ourselves to the worst case -- that in which the algorithm can only detect the presence of a cycle that includes the vertex $k$. 

If we can show that there is some efficient map from the edges of $H$ to the edges of $G$, then the s-t connectivity algorithm can be run on the graph $H$ whilst only querying the input oracle for $G$, and we can detect cycles in $G$ in time $\tilde{O}(n\sqrt{d})$.

Additionally, we must show that the algorithm fails only with some constant probability, even when given a `bad' input (one with an equal number of clockwise and anticlockwise edges (modulo 3)). 

We begin by describing a randomised approach which causes the reduction to s-t connectivity to fail with only constant probability when given a bad input. Suppose we were to run the s-t connectivity algorithm on the graph $H$ associated with an input graph $G$, which contains a cycle. If the adjacency matrix of the graph were such that the number of clockwise edges and the number of anticlockwise edges on the cycle were congruent modulo 3, then the algorithm, as it stands, would fail to detect the cycle with certainty, as a result of Lemma \ref{lem:reduction}.

We could prevent the algorithm from failing by flipping the direction of a single edge on the cycle. Recall that we are given some vertex $k \in V$, and add edges $(s,k_0)$ and $(t,k_1)$ in $H$ before testing for a path between $s$ and $t$. Our solution is to flip the direction of some random subset of the edges adjacent to vertex $k$, and show that this flips exactly one edge on the cycle with high probability ($\geq 1/2$).

To choose a random subset of edges to flip, we colour every vertex in $G$ with a colour chosen from $\{0,1\}$. Then, for every edge adjacent to $k$ in $G$, if the vertex at the end of the edge is coloured $1$, we flip the direction of the edge, and otherwise do nothing. The colouring is achieved using a family of pairwise independent hash functions $\mathcal{H}$ from $[n]$ to $\{0,1\}$. The pairwise independence gives the constraint that, for $x,y \in [n]$ and $a,b \in \{0,1\}$, and a hash function $h$ chosen uniformly at random from $\mathcal{H}$,
\[
\Pr[h(x)=a \cap h(y)=b] = \frac{1}{4}.
\]
Let the two vertices adjacent to $k$ in the cycle be $a$ and $b$, and choose a pairwise independent hash function $h$ uniformly at random from $\mathcal{H}$. Then, by pairwise independence, we have
\[
\Pr[h(a)=h(b)=0]=\Pr[h(a)=h(b)=1]= \frac{1}{4} 
\]
and
\[
\Pr[h(a) \neq h(b)]= \frac{1}{2}.
\]
So the above method will fail to flip either of the edges $(a,k), (b,k)$ with probability $\frac{1}{4}$. Otherwise, with probability $\frac{1}{2}$ exactly one of the two edges will be flipped, and with probability $\frac{1}{4}$ both edges will be flipped. Thus, with probability at least $\frac{1}{2}$, the number of clockwise edges will no longer equal the number of anticlockwise edges modulo 3. Therefore, by colouring the vertices of the graph using a pairwise independent hash function, the algorithm will fail with probability at most $\frac{1}{2}$ when given a `bad' input. On the other hand, if the algorithm is given some `good' input, then this process may cause the algorithm to fail; however, this will happen with probability at most $1/4$, by the same argument as before.
\\
\\
Now we consider the map from the edges of $H$ to the edges of $G$. We can query the adjacency matrix of $H$ implicitly by querying the entries of $G$'s adjacency matrix. Given two vertices $u_b$ and $v_{b'}$ in $H$, we test for the presence of the edge $(u_b, v_{b'})$ as follows. First we determine the direction of the edge $(u,v)$ in $G$, if it were to exist. If $v > u$, then the edge is directed from $u \rightarrow v$, otherwise it is directed from $v \rightarrow u$. If $u=k$, then we look up the colour of vertex $v$ as determined by our hash function, and vice versa if $v=k$. If the colour is 0, we do nothing; if it is 1, we flip the edge. 

Next we test whether the edge is allowed to exist. If the edge is directed from $u \rightarrow v$, then it is allowed only if $b' \equiv b+1 \mod 3$. Similarly, if the edge is directed from $v \rightarrow u$, then it is allowed only if $b' \equiv b-1 \mod 3$. Finally, if the edge is allowed to exist, then we test for its presence in $G$ by querying the $uv$ entry of $G$'s adjacency matrix. If the result of the query is 1, then the edge exists and we return 1. In all other cases (the query returns 0, or the edge is not allowed), we return 0.

We can use this map to run the s-t connectivity algorithm on $H$ without explicitly constructing it. That is, rather than allowing the algorithm to query the input oracle for $G$, we allow it to query the circuit that implements the process described above, which will query the input oracle for $G$ as appropriate. All steps of the map run in time $\text{polylog} (n)$ and require $\log n$ space.

\paragraph{Probability of Failure --}
We have already shown that if the input graph $G$ contains a cycle that includes vertex $k$, then there will be a path from $s$ to $t$ in $H$ with probability at least $1/2$. In this section, we consider the probability of detecting this path using the s-t connectivity algorithm of Belovs and Reichardt. Recall that their algorithm runs in time $\tilde{O}(n\sqrt{d})$, where $d$ is an upper bound on the length of the path connecting $s$ and $t$, and requires $O(\log n)$ space. In our case, the length of the path is proportional to the length of the cycle that we are trying to detect. Since we do not know this length in advance, we do not know an upper bound on the length of the path between $s$ and $t$ in $H$. Therefore, we will have to `guess' an upper bound, and modify this guess as the algorithm progresses. 

By Theorem \ref{ther:belovs}, if $d \geq l$, then the s-t connectivity algorithm detects the presence of a cycle with probability at least $9/10$ if one exists, and otherwise says that no cycle exists with probability at least $9/10$. If $G$ contains a cycle of length $l \leq d$ that includes $k$, then with probability $p \geq 1/2$ there will be a path from $s$ to $t$ in $H$. By running the s-t connectivity algorithm on $H$, we will detect this path with probability $9/20$.

If $G$ does not contain a cycle, then there will be no path from $s$ to $t$ in $H$, and the s-t connectivity algorithm will return false with probability $\geq 9/10$. 

In the case that $G$ contains a cycle of $l > d$, then the s-t connectivity algorithm may still detect the presence of the path from $s$ to $t$ in $H$. However, since this probability could be very small, we shall assume that it never detects such a cycle. 

The colouring step requires the use of a pairwise-independent hash function, which requires $O(\log n)$ space and $O(\log^2 n)$ time. The s-t connectivity algorithm requires $O(\log n)$ space and $\tilde{O}(n\sqrt{d})$ time. Therefore, the algorithm of Theorem \ref{theo:main_algo} requires $\tilde{O}(n\sqrt{d})$ time and $O(\log n)$ space.

\end{proof}

\section{Detecting Arbitrary Cycles}\label{sec:grover}
In the previous sections we described an algorithm that, given a graph $G=(V,E)$, a vertex $k \in V$, and an estimate $d$ of the length of a cycle in $G$, outputs 1 with probability $\geq 9/20$ if there is a cycle of length $l \leq d$ in $G$ that contains $k$, and outputs $0$ with probability $\geq 9/10$ if $G$ does not contain a cycle. By repeating the algorithm $O(\log 1/\epsilon)$ times and using majority voting, we obtain an algorithm $\mathcal{A}$ that fails (i.e. returns false positives or false negatives) with probability at most $\epsilon$. In particular, we could reduce the probability of failure of $\mathcal{A}$ to $\frac{1}{\text{poly}(n)}$ with only an $O(\log n)$ overhead. In this case, since the overall algorithm calls $\mathcal{A}$ $\text{poly}(n)$ times, we can reduce the probability of failure of the overall algorithm to an arbitrary constant. This holds even for quantum algorithms calling $\mathcal{A}$ in superposition \cite{bernstein1997quantum, hoyer2003quantum}.

We can use this algorithm as a sub-routine for a more general algorithm that is capable of detecting the presence of arbitrary cycles in $G$. In particular, we make use of a variant of Grover search over a set of $N$ elements that allows us to search for a good solution without knowing how many good solutions there are. The approach was introduced in \cite{boyer1996tight}, and proceeds as follows:

\paragraph{QSearch:}
\begin{enumerate}
\item Initialise $m=1$ and set $\lambda$ so that $1 < \lambda < 4/3$.
\item Choose $j$ uniformly at random from the nonnegative integers smaller than $m$.
\item Apply $j$ iterations of Grover's algorithm, starting from initial state $\ket{\Psi_0} = \sum_v \frac{1}{\sqrt{N}}\ket{v}$.
\item Observe the register: let $i$ be the outcome.
\item If $i$ is indeed a solution, then the problem is solved: exit.
\item Otherwise, set $m$ to $\min(\lambda m, \sqrt{N})$ and go back to step 2.
\end{enumerate}
~\\
Then \cite{boyer1996tight} proves the following result:

\begin{theorem}[Theorem 3 of \cite{boyer1996tight}]\label{theo:grover}
Given oracle access to some Boolean function $f : [N] \rightarrow \{0,1\}$, such that the set of `solutions' $M = \{x \in [N] : f(x) = 1\}$ has unknown size $t = |M|$, the algorithm {\bf QSearch} finds a solution if there is one using an expected number of $O(\sqrt{N/t})$ Grover iterations. In the case that there is no solution, then {\bf QSearch} runs forever.
\end{theorem}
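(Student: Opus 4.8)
The plan is to combine two exact facts about one round of \textbf{QSearch} with a geometric-series accounting of the schedule $m = 1, \lambda, \lambda^2, \dots$ (capped at $\sqrt N$). Write $\sin^2\theta = t/N$ with $\theta \in (0,\pi/2]$. The first ingredient is the standard Grover rotation identity: starting from $\ket{\Psi_0}$ and applying $j$ Grover iterations, measuring the register yields an element of $M$ with probability exactly $\sin^2\!\big((2j+1)\theta\big)$. The second ingredient, and the only genuine computation in the proof, is the averaging lemma: if $j$ is drawn uniformly from $\{0,1,\dots,m-1\}$ and a fresh copy of $\ket{\Psi_0}$ is used, then a single round succeeds with probability
\[
P_m \;=\; \frac{1}{m}\sum_{j=0}^{m-1}\sin^2\!\big((2j+1)\theta\big) \;=\; \frac{1}{2} \;-\; \frac{\sin(4m\theta)}{4m\sin(2\theta)} .
\]
This is obtained by writing $\sin^2 x = \tfrac12(1-\cos 2x)$ and evaluating $\sum_{j=0}^{m-1}\cos\!\big((4j+2)\theta\big)$ as the real part of a finite geometric series in $e^{4i\theta}$. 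The only consequence I will use: since $|\sin(4m\theta)| \le 1$, whenever $m \ge 1/\sin(2\theta)$ the correction term has absolute value at most $1/4$, so $P_m \ge 1/4$; i.e.\ once $m$ is large enough every round succeeds with at least constant probability, and distinct rounds are mutually independent because each re-initialises the state and draws a fresh $j$.

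The regime $t > 3N/4$ is disposed of directly: the first round performs no Grover iterations and simply measures $\ket{\Psi_0}$, which lands in $M$ with probability $t/N > 3/4$, and one checks that $P_m$ stays bounded below by a positive constant for all $m$ in this regime, so the expected numbers of rounds and of Grover iterations are $O(1) = O(\sqrt{N/t})$. So assume $t \le 3N/4$; then $\cos\theta \ge 1/2$, hence $\sin(2\theta) = 2\sin\theta\cos\theta \ge \sin\theta = \sqrt{t/N}$. Set $m_0 := 1/\sin(2\theta)$, so that $m_0 \le \sqrt{N/t}$ (and in particular $m_0 \le \sqrt N$ since $t \ge 1$); thus the schedule, capped at $\sqrt N$, does reach a value $\ge m_0$, and $P_m \ge 1/4$ for every schedule value $m \ge m_0$. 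Let $u^\ast := \lceil \log_\lambda m_0\rceil$, the index of the first round with $m \ge m_0$, so that $\lambda^{u^\ast} < \lambda m_0 = O(m_0)$.

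Finally I would bound the expected number of Grover iterations by splitting the (possibly infinite) run at round $u^\ast$. In round $u$ the value of $m$ is at most $\lambda^u$, and conditioned on that round being reached the expected number of Grover iterations performed there is $(m-1)/2 < \lambda^u/2$. For $u < u^\ast$ I bound the probability that round $u$ is reached by $1$, so these rounds contribute at most $\sum_{u=0}^{u^\ast-1}\lambda^u/2 \le \lambda^{u^\ast}/\big(2(\lambda-1)\big) = O(m_0)$. For $u \ge u^\ast$, reaching round $u$ forces rounds $u^\ast,\dots,u-1$ to have all failed; each of those has $m \ge m_0$, hence succeeds with probability $\ge 1/4$, so by independence round $u$ is reached with probability at most $(3/4)^{u-u^\ast}$, and these rounds contribute at most $\sum_{u\ge u^\ast}(3/4)^{u-u^\ast}\lambda^u/2 = \tfrac{\lambda^{u^\ast}}{2}\sum_{k\ge 0}(3\lambda/4)^k$ --- a series that converges precisely because $1<\lambda<4/3$ makes $3\lambda/4<1$ --- which is again $O(m_0)$. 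Adding the two phases gives an expected $O(m_0) = O(\sqrt{N/t})$ Grover iterations. The ``runs forever'' claim when $M=\emptyset$ is immediate, since step 5 never triggers. The main obstacle is the averaging-lemma identity for $P_m$: it is what pins the success-probability threshold to $m \asymp 1/\sin(2\theta) \asymp \sqrt{N/t}$, exactly the scale at which the two geometric sums close up; everything after it is routine summation that only uses $\lambda < 4/3$ for convergence.
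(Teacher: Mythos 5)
Your proposal is correct. Note, however, that the paper does not prove this statement at all -- it is imported verbatim as Theorem 3 of the cited work of Boyer, Brassard, H{\o}yer and Tapp, so the only comparison available is with that original proof, which your argument essentially reproduces: the averaging identity $P_m = \tfrac12 - \tfrac{\sin(4m\theta)}{4m\sin(2\theta)}$, the critical scale $m_0 = 1/\sin(2\theta) \le \sqrt{N/t}$ (valid once $t \le 3N/4$), and the two geometric sums whose convergence is exactly the reason for the constraint $1 < \lambda < 4/3$. The one place you gesture rather than argue is the regime $t > 3N/4$ ("one checks that $P_m$ stays bounded below"); this does hold -- writing $\phi = \pi - 2\theta$ one gets $P_m = \tfrac12 + \tfrac{\sin(2m\phi)}{4m\sin\phi}$, and either $2m\phi \le \pi$ so the correction is nonnegative, or $4m\sin\phi > 4$ so it exceeds $-1/4$, giving $P_m \ge 1/4$ for every $m$ -- so the gap is cosmetic, and the rest of the accounting (fresh state and fresh $j$ each round, hence independence; conditional expectation $< \lambda^u/2$ iterations per round) is sound.
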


We will use a variant of this algorithm to search over the set of vertices in the input graph, using the algorithm $\mathcal{A}$ as an oracle -- that is, for each vertex $v \in V$, we will call $\mathcal{A}$ with some guess $d$ and with the vertex $k$ set to $v$. We will call {\bf QSearch} multiple times, each time with a different guess at the cycle length, and will ask it to stop after some number of iterations that depends on the current guess. More precisely, we perform the following:

\begin{enumerate}
\item For $i = 1$ to $\lceil \log_2 n \rceil$:
\begin{enumerate}
\item Run {\bf QSearch} over the vertices of the graph with $d = 2^i$, and stop when we have performed more than $C'' \sqrt{\frac{n}{2^i}}$ Grover iterations, for some constant $C''$.
\item If {\bf QSearch} returns a solution, then output the solution and exit, otherwise continue.
\end{enumerate}
\item Output `no cycle exists'.
\end{enumerate}

This detects the presence of a cycle with high probability if one exists, since, as soon as $i$ becomes large enough that $d = 2^i \geq l$, the cycle detection algorithm detects the presence of cycles with (arbitrarily) high probability. At this point, {\bf QSearch} finds a good solution (i.e. a vertex that causes the cycle detection algorithm to accept) with high probability.

More precisely, when $d \geq l$, $l$ out of $n$ vertices will provide good solutions (i.e. will have caused $\mathcal{A}$ to accept). At this point, by Theorem \ref{theo:grover}, the expected number of Grover iterations required to find a solution using {\bf QSearch} is $\leq D\sqrt{n/l}$ for some (known) constant $D$. When $i$ first becomes large enough that $d = 2^i \geq l$, the guess $d$ will be at most twice the length of the cycle $l$, i.e. $d \leq 2l$. Since the expected number of iterations to find a solution using {\bf QSearch} at this point is $\leq D\sqrt{n/l}$, if we stop after $C'' \sqrt{\frac{n}{2^i}}$ iterations, the probability that {\bf QSearch} has not been able to find a solution yet is bounded above by
\[
\Pr\left[\text{no solution after } C'' \sqrt{\frac{n}{2l}}\text{ iterations}\right] \leq \frac{D\sqrt{2}}{C''}
\]
by Markov's inequality, and we can choose $C''$ to make this probability arbitrarily small. Each round of the algorithm after this point has a smaller probability of detecting the cycle, since we perform fewer Grover iterations as $i$ increases. However, by choosing sufficiently large $C''$, we can ensure that the probability of failing to detect a cycle during the first round of {\bf QSearch} in which $d \geq l$ is smaller than, say, $1/3$. 

If $G$ does not contain a cycle, then $\mathcal{A}$ will return false with high probability on all vertices, independently of the value of $d$. Therefore, {\bf QSearch} will fail to `find' a solution with high probability every time it is run, and the above algorithm will output `no cycle exists' with high probability. 

To analyse the time complexity of the algorithm, we will consider what happens when there is no cycle present. If a cycle is present, then by the above argument it will be found with high probability and the algorithm will exit early, requiring less time. In the $i^{th}$ round of the above algorithm, we run {\bf QSearch}, stopping after $O(\sqrt{\frac{n}{2^i}})$ iterations of Grover search have been performed. Each iteration of Grover search requires a single call to both $\mathcal{A}$ and $\mathcal{A}^{-1}$, each of which take time $\tilde{O}(n\sqrt{d}) = \tilde{O}(n\sqrt{2^i})$. Therefore, the time taken to run the $i^{th}$ round is $O(\sqrt{\frac{n}{2^i}})\cdot\tilde{O}(n\sqrt{2^i}) = \tilde{O}(n^{3/2})$.

We run at most $\lceil \log_2 n\rceil$ rounds of the above process, requiring at most $\tilde{O}(n^{3/2})$ time in total. 
\\
\\
In summary, by making use of a variant of Grover search and repeatedly guessing at increasing cycle lengths, we are able to find a vertex $k \in V$ that is part of a cycle in $G$ with probability $\geq 2/3$ if such a vertex exists, and return false with probability $\geq 2/3$ if $G$ contains no cycle. This requires $\tilde{O}(n^{3/2})$ time and $O(\log n)$ bits and qubits of storage.

\section{Deciding Bipartiteness}\label{sec:bipartite}  
We can view the algorithm for cycle detection as a special case of a more general algorithm. Currently, we arbitrarily orient the edges of an initially undirected graph, and accept if this forms a cycle in which the number of clockwise and anticlockwise edges are unequal modulo 3. We might ask what happens when we look for cycles with an unequal number of clockwise and anticlockwise edges modulo some other constant $s$. This would change the reduction to s-t connectivity by modifying the structure of the graph $H$ that is constructed from $G$. In particular, each vertex in $G$ would be split into $s$ sub-vertices in $H$, and for each edge $(u,v)$ in $G$ we would have $s$ corresponding edges $(u_0, v_1), (u_1, v_2), ..., (u_{s-2},v_{s-1}), (u_{s-1}, v_0)$. The cases $s > 3$ behave similarly to the case $s=3$, and are uninteresting. However, the case $s=2$ proves useful. In this case, the original algorithm (i.e. without random colouring) fails when the number of clockwise edges and the number of anticlockwise edges in every cycle $C$ differs by some multiple of 2. This will be the case if $C$ is of even length, and is independent of the orientation of the individual edges. Conversely, no odd-length cycle will cause the algorithm to `fail'. This means that, given some graph $G=(V,E)$ and a vertex $k \in V$, the algorithm will accept (with certainty) if $k$ is a part of an odd-length cycle, and reject otherwise. A graph is bipartite if and only if it contains no odd-length cycles. Thus, setting $s=2$ (and omitting the colouring step) allows us to decide whether or not a graph is bipartite. Since we have only removed a step of the algorithm, it still runs in time $\tilde{O}(n^{3/2})$ and requires $O(\log n)$ space.


\section{Cycle Detection in the Adjacency Array Model}\label{sec:adj_array}
Rather than an adjacency matrix, we may be provided with an adjacency array description of a graph as an input. In this model, the graph is given to us as a list of vertices associated to each vertex in the graph, which define its neighbours in the graph. Following \cite{durr2004quantum}, we assume that we are given the following information:

\begin{itemize}
\item The degrees of the vertices $d_1, d_2, ..., d_n$ and for every vertex $u$ an array with its neighbours $f_i : [d_i] \rightarrow [n]$. So $f_i(j)$ returns the $j^{\text{th}}$ neighbour of vertex $i$, according to some arbitrary but fixed numbering of the outgoing edges of $i$. We will assume that the input graph is undirected. 
\end{itemize}
Since we are given the degrees of each vertex for free, we can calculate the number of edges $m$ as $m = \frac{1}{2}\sum_{u \in V} d_u$. In this way, we can discard any $n$-vertex input graph with $m \geq n$, since such a graph must necessarily contain a cycle. This means that we only need to consider graphs with $m < n$. If we can map from the edges of $H$ to the edges of $G$ in the adjacency array model, then we can run a quantum walk on $H$, starting from $s$ and with $t$ as the single marked vertex, which, by a result of Belovs \cite{belovs_quantum_2013, belovs2013time}, can be used to decide s-t connectivity. By making use of the reduction of Lemma \ref{lem:reduction}, and the version of Grover search outlined in section \ref{sec:grover}, we can use a quantum walk in place of the span-program-based s-t connectivity algorithm to detect cycles in the adjacency array model in time $\tilde{O}(n\sqrt{d_m})$, where $d_m$ is the maximum degree of any vertex in the graph.

\subsection{Map from the edges of $H$ to the edges of $G$}
Given a vertex $u_b$ in $H$, we want to be able to produce an array of its neighbours. In particular, for a vertex $u_b$ we need a function $g_{u_b} : [d'_{u_b}] \rightarrow [3n]$, where $d'_{u_b}$ is the degree of vertex $u_b$ in $H$, so that $g_{u_b}(j)$ returns the $j^{th}$ neighbour of vertex $u_b$ in $H$. First, note that the degree of the vertex $u_b$ in $H$ is the same as the degree of the vertex $u$ in $G$ -- that is, $d'_{u_b} = d_u$, unless $u_b$ is connected to $s$ or $t$. Also, recall that the neighbours of vertex $u_b$ in $H$ are all of the form $v_{b'}$, where $v$ is a neighbour of $u$ in $G$, and $b'$ depends on the orientation of the edge $(u,v)$.

In general, suppose that we want to compute $g_{u_b}(j)$, the $j^{th}$ neighbour of vertex $u_b$. We know that it will be $v_{b'}$, for $v = f_u(j)$ (i.e. the $j^{th}$ neighbour of vertex $u$ in $G$) and some $b' \in \{0,1,2\}$. To calculate $b'$, we use the same process described in section \ref{sec:algo}, which begins by determining the direction of the edge $(u,v)$ as follows: if $v > u$, then the edge is directed $u \rightarrow v$, otherwise it is directed from $v \rightarrow u$. If $u = k$, then we look up the colour of vertex $v$ as determined by our hash function, and vice versa if $v = k$. If the colour is 0, we do nothing; if it is 1, we flip the edge. Finally, if the edge is directed $u \rightarrow v$, then $b' = b + 1 \mod 3$, otherwise $b' = b - 1 \mod 3$. We then return the answer: $g_{u_b}(j) = v_{b'}$. 

The functions $g_{u_b}$ for every vertex $u_b$ in $H$ can be computed using the function $f_{u}$ given by the adjacency array for vertex $u$, as well as some other operations that require $O(\text{polylog}(n))$ time and $O(\log n)$ space. Therefore, we can implement a quantum walk on the graph $H$ by implicitly querying the adjacency arrays for the vertices in $G$. The following section describes such a quantum walk.

\subsection{Quantum Walk for s-t Connectivity}
We use a quantum walk algorithm presented by Belovs in \cite{belovs_quantum_2013} and \cite{belovs2013time} for detecting a marked vertex in a graph, with the starting vertex set to $s$ and with $t$ being the only marked vertex in the graph. Then the presence of a path from $s$ to $t$ can be detected in $\tilde{O}(\sqrt{ln})$ steps of the quantum walk, where $l$ is the length of the path. Thus, given a vertex $k$ in the graph and some upper bound $d$ on the length of the cycle, we can detect the presence of a cycle that includes $k$ in $\tilde{O}(\sqrt{dn})$ steps of the quantum walk. Then by using the same variant of Grover search in section \ref{sec:grover}, we can detect the presence of an arbitrary cycle in $\tilde{O}(n)$ steps.

\subsection{Implementation}
In this section we describe an analogue of the algorithm from section \ref{sec:algo}, which uses Belovs' quantum walk in place of the span-program-based s-t connectivity algorithm of Reichardt and Belovs. As before, the algorithm takes as input a graph $G=(V,E)$ (except this time in the adjacency array model), a vertex $k \in V$, and some integer $d$. It outputs true with some probability when $G$ contains a cycle that includes $k$, and returns false with some probability when $G$ contains no cycles. Since we can dismiss graphs with more than $n$ edges in $\tilde{O}(n)$ time, we henceforth assume that the input graph has fewer than $n$ edges.

Once again, we consider the graph $H=(V',E')$ corresponding to the graph $G=(V,E)$. In order to apply the quantum walk, we must first make $H$ bipartite (which, in general, it will not be to begin with). To do this, we transform $H=(V',E')$ into $H'=(V'',E'')$ with vertex set $V'' = V' \times \{0,1\}$ and edge set $E'' = \{ ((u,0)(v,1), (u,0)(v,1)) : uv \in E'\}$. Then, we set $s = (k_0,0)$ and $t = (k_1,1)$. The graph is now bipartite, and we can still efficiently compute the neighbours of each vertex. Let $A$ be the set of vertices $\{(u,0) : u \in V'\}$ and $B$ the set of vertices $\{(u,1) : u \in V'\}$, and let $d_u$ denote the degree of vertex $u$.

The vectors $\{\ket{s} \otimes \ket{e_s}\} \cup \{\ket{u} \otimes \ket{v} : (u,v) \in E\}$ give the basis for the vector space of the quantum walk, which starts in the state $\ket{s}\ket{e_s}$. Let $\mathcal{H}_u = \text{span}(\{\ket{u}\ket{v} : (u,v) \in E\})$ denote the local space of vertex $u \neq s$, and $\mathcal{H}_s = \text{span} (\{\ket{s}\ket{v} : (s,v) \in E\} \cup \{\ket{s}\ket{e_s}\})$. A step of the quantum walk is defined as $R_AR_B$ where $R_A = \bigoplus_{u \in A}D_u$ and $R_B = \bigoplus_{u \in B}D_u$. Each diffusion operator $D_u$ acts only on $\mathcal{H}_u$, and is defined as follows:
\begin{itemize}
\item $D_t$ is the identity.
\item If $u \notin \{s,t\}$, then $D_u = I - 2\ket{\zeta_u}\bra{\zeta_u}$, where
\[
\ket{\zeta_u} = \frac{1}{\sqrt{d_u}}\sum_{(u,v) \in E} \ket{u}\ket{v}.
\]
\item $D_s = I - 2\ket{\zeta_s}\bra{\zeta_s}$, where
\begin{equation}\label{eq:local_s}
\ket{\zeta_s} =  \frac{1}{\sqrt{1 + d_sCd}} \left(\ket{s}\ket{e_s} + \sqrt{Cd}\sum_{(s,v) \in E} \ket{s}\ket{v}\right),
\end{equation}
for some constant $C$.
\end{itemize}

Then we have the following result, which follows directly from Theorem 4 of \cite{belovs_quantum_2013}:

\begin{theorem}\label{theo:belovs}
Given a graph $G=(V,E)$ such that $|E| \leq n$, two vertices $s$ and $t$ in $G$, and an integer $d$, then by applying $R_A$ and $R_B$ $O(\sqrt{dn})$ times we can detect a path from $s$ to $t$ with probability $\geq 2/3$ if a path of length $l \leq d$ exists, or otherwise say that no path exists with probability $\geq 2/3$.
\end{theorem}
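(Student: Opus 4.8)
The plan is to derive this as a direct instantiation of Belovs' electric‑network quantum walk (Theorem~4 of \cite{belovs_quantum_2013}), applied to the bipartite graph $H'$ together with the diffusion operators $D_u$ just defined. So the first step is to recall that framework in the form needed here: one views the graph as an electric network with a weight $w_e$ on each edge (the weights being encoded by the coefficients in each $\ket{\zeta_u}$), designates a marked set $M$ (here $M=\{t\}$, realised by $D_t=I$), and attaches a pendant edge $\ket{s}\ket{e_s}$ at the start vertex; the walk operator is the product of the two block reflections $R_A=\bigoplus_{u\in A}D_u$ and $R_B=\bigoplus_{u\in B}D_u$, run on the initial state $\ket{s}\ket{e_s}$ and combined with phase estimation. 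Theorem~4 then guarantees that $O(\sqrt{W\cdot\mathcal R})$ applications of $R_A$ and $R_B$ suffice to distinguish, with constant probability, the case where the effective resistance between $s$ and $M$ (measured through the pendant) is at most $\mathcal R$ from the case where $s$ and $M$ lie in different connected components, where $W=\sum_e w_e$ is the total weight.

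The second step is to read off the electric network encoded by the $D_u$ and bound $W$ and $\mathcal R$. Every edge of $H'$ carries weight $1$: this is immediate for edges not incident to $s$, and comparing amplitudes in \eqref{eq:local_s} shows the edges incident to $s$ carry weight $1$ as well, while the pendant $\ket{s}\ket{e_s}$ carries weight $1/(Cd)$ (i.e.\ $\ket{\zeta_s}$ is proportional to $\ket{s}\ket{e_s}+\sqrt{Cd}\sum_{(s,v)\in E}\ket{s}\ket{v}$ only after normalisation). Hence $W=|E|+1/(Cd)\le n+1=O(n)$ by the hypothesis $|E|\le n$. For the resistance: if an $s$--$t$ path of length $l\le d$ exists in $H'$, then the pendant (resistance $Cd$) in series with that path (at most $d$ unit resistors) connects the far end of the pendant to $t$, so by Rayleigh monotonicity $\mathcal R\le Cd+d=O(d)$. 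Plugging $W=O(n)$ and $\mathcal R=O(d)$ into Theorem~4 yields $O(\sqrt{nd})$ applications of $R_A,R_B$ to detect such a path with constant probability; when no $s$--$t$ path exists the two vertices are in different components, so running for the same number of steps reports ``not connected'' with constant probability. The constant $C$ is then chosen large enough that this continues to work when the walk is used as a cycle detector (Lemma~\ref{lem:reduction} only guarantees a path of length at most $2c+2$, so one wants to detect resistances up to a constant multiple of $d$) and that the relevant phase gap exceeds a fixed constant times $1/\sqrt{nd}$; a constant number of repetitions and a majority vote finally boost both success probabilities to $2/3$.

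The bulk of the argument — checking that $H'$ is bipartite with parts $A,B$ so that $R_AR_B$ is a legitimate walk operator, and that the $D_u$ and the initial state $\ket{s}\ket{e_s}$ match the hypotheses of Theorem~4 — is routine and follows at once from the construction in this section. The genuinely substantive ingredient, and hence the ``main obstacle'', is Theorem~4 of \cite{belovs_quantum_2013} itself, whose proof supplies the spectral‑gap estimate tying the phase gap of $R_AR_B$ around $\ket{s}\ket{e_s}$ to $W$ and $\mathcal R$ via a positive witness (a unit $s$--$M$ flow of energy $\mathcal R$) and a negative witness (a harmonic potential on the component of $s$). Beyond invoking that theorem, the one point requiring care is the weight bookkeeping above: the pendant weight must be taken to be $\Theta(1/d)$ rather than $\Theta(1)$, since that is exactly what makes $\mathcal R=\Theta(d)$ — tuning the walk to the right path length — while keeping $W=\Theta(n)$, so that $\sqrt{W\mathcal R}=\Theta(\sqrt{nd})$ as claimed rather than something larger.
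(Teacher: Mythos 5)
Your proposal is correct and follows essentially the same route as the paper, which simply invokes Theorem~4 of Belovs' electric-network quantum walk framework for the walk defined by $R_A$, $R_B$ and the pendant edge at $s$. In fact you supply more detail than the paper does (the weight bookkeeping giving total weight $O(n)$ and effective resistance $O(d)$, hence $O(\sqrt{nd})$ steps), and that accounting is consistent with the $\sqrt{Cd}$ amplitudes in the definition of $\ket{\zeta_s}$.
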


By using the same arguments given in section \ref{sec:grover}, we can make use of the algorithm of Theorem \ref{theo:belovs} to detect arbitrary cycles by applying the operators $R_A$ and $R_B$ $\tilde{O}(n)$ times. Furthermore, we may use a special case of this algorithm to decide bipartiteness, also requiring $\tilde{O}(n)$ applications of $R_A$ and $R_B$. The efficiency of the algorithm then depends on the efficiency with which we can implement the reflections $R_A$ and $R_B$. 

\subsection{Implementing $R_A$ and $R_B$}\label{sec:reflection_ops}
We will restrict our attention to $R_A$; $R_B$ is implemented similarly (and is actually easier, since the vertex $s$, which requires a more complex diffusion operator, is in $A$). We need to implement
\begin{eqnarray*}
R_A = \bigoplus_{v \in A} D_v &=& \bigoplus_{v\in A}(I - 2\ket{\zeta_u}\bra{\zeta_u}) \\
&=& \bigoplus_{v\in A}(I - 2\ket{v}\bra{v} \otimes \ket{\phi_v}\bra{\phi_v}) \\
&=& I - 2\sum_{v\in A} (\ket{v}\bra{v} \otimes \ket{\phi_v}\bra{\phi_v})
\end{eqnarray*}
where we define $\ket{\phi_v} := \frac{1}{\sqrt{d_v}}\sum_{i \in [d_v]} \ket{f_v(i)}$ for $v \neq s$ (recall that we are given the degrees $d_1, d_2, ..., d_n$ of each vertex, and for each vertex $v$ a function $f_v: [d_v] \rightarrow [n]$, so that $f_v(j)$ returns the $j^{th}$ neighbour of vertex $v$.). $\ket{\phi_s}$ is defined slightly differently: there is an extra $\ket{s}$ term in equation (\ref{eq:local_s}), which we view as a `dangling' edge $e_s$ incident to vertex $s$, which is defined to be the $(d_s+1)^{th}$ neighbour of $s$, where $d_s$ is the degree of vertex $s$. That is, we add an entry to the adjacency array for $s$ so that $f_s(d_s+1) = e_s$. Then we may define
\[
\ket{\phi_s} := \frac{1}{\sqrt{1 + d_sCd}} \left(\ket{f_s(d_s+1)} + \sqrt{Cd}\sum_{i \in [d_s]} \ket{f_s(i)}\right).
\]
Intuitively, the $\ket{\phi_v}$ states represent the neighbours of the vertex $v$, in correspondence with the states $\ket{\zeta_v}$ given above. 

If we can implement a map $\ket{v}\ket{0} \mapsto \ket{v}\ket{\phi_v}$ for each vertex $v \in A$, then we may implement $R_A$ by performing the local reflections $I - 2\ket{v}\bra{v}\otimes \ket{\phi_v}\bra{\phi_v}$ in parallel for each $v\in A$. It is possible to implement the map efficiently for every vertex in superposition -- in particular, we have the following result:

\begin{restatable}{lemma}{lemreflect}\label{lem:reflect}
$R_A$ and $R_B$ can be implemented using $O(\sqrt{d_m})$ queries to the adjacency array, and $\text{\emph{polylog}}(n)$ additional operations per query.
\end{restatable}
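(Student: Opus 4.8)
The plan is to realise each reflection in the conjugated form $R_A = W\,R_0\,W^{\dagger}$, where $W$ is a unitary, block-diagonal over the vertex register, with $W\ket{v}\ket{0}=\ket{v}\ket{\phi_v}$ for every $v\in A$, and $R_0 = I - 2\,(I\otimes\ket{0}\bra{0})$ is the reflection about the subspace carrying $\ket{0}$ in the second (neighbour) register. Since the block structure of $W$ commutes with $\ket{v}\bra{v}$ and $\sum_v\ket{v}\bra{v}=I$, conjugating $R_0$ by $W$ reproduces exactly $I-2\sum_{v\in A}\ket{v}\bra{v}\otimes\ket{\phi_v}\bra{\phi_v}=R_A$, and $R_0$ is itself a multiply-controlled phase costing $\text{polylog}(n)$ gates and no queries. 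The same works for $R_B$, which is strictly easier because the distinguished vertex $s$ (whose local state is non-uniform) lies in $A$, not $B$. So the whole lemma reduces to implementing $W$ (and $W^{\dagger}$) within $O(\sqrt{d_m})$ queries and $\text{polylog}(n)$ work per query.

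I would build $W$ in three stages, all controlled on the vertex register $\ket{v}$. First, on a fresh index register, prepare the uniform superposition $\frac{1}{\sqrt{d_v}}\sum_{i\in[d_v]}\ket{i}$; for $v=s$ prepare instead the two-level weighted superposition over $[d_s+1]$ that puts weight $\sqrt{Cd}$ on each of the $d_s$ real neighbours and weight $1$ on the dangling index $d_s+1$ (suitably renormalised). Because the degrees are provided for free and $C$, $d$ are known, each such state is preparable in $\text{polylog}(n)$ gates and no queries (a uniform superposition over a set of known size, plus one controlled rotation to split off the dangling amplitude). Second, apply the adjacency-array oracle once to send $\ket{v}\ket{i}\ket{0}\mapsto\ket{v}\ket{i}\ket{f_v(i)}$ (the dangling index going to a reserved label); this is a single query, and the neighbour register now carries $\ket{\phi_v}$ but remains entangled with $\ket{i}$. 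Third, erase the index register so it returns to $\ket{0}$.

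The third stage is the crux and the only source of further queries. Since $G$ is not a multigraph, each $f_v$ is injective, so for a neighbour label $w=f_v(i)$ there is a \emph{unique} index $j\in[d_v]$ with $f_v(j)=w$, namely $j=i$. Controlled on $\ket{v}$ and $\ket{w}$, I run the exact variant of Grover search over the known-size domain $[d_v]$ to find this unique $j$, write it into an auxiliary register, cancel the original index register against it with a CNOT, and run the Grover search in reverse to clear the auxiliary register. Exactness is available precisely because the number of marked elements is known to be exactly one, so the erasure is perfect; it costs $O(\sqrt{d_v})=O(\sqrt{d_m})$ queries plus $\text{polylog}(n)$ gates per query. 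Composing the three stages yields $W$, and hence $R_A$ and $R_B$, within the claimed bounds.

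The main obstacle is exactly this uncomputation: without an oracle for $f_v^{-1}$ one cannot undo the single query directly, and re-querying the same oracle does not help; inverting the injective map $f_v$ by exact Grover search is what forces the $\sqrt{d_m}$ overhead, and is the reason the array-model algorithm runs in $\tilde{O}(n\sqrt{d_m})$ time while having only $\tilde{O}(n)$ query complexity. Two pieces of routine bookkeeping remain: that a single ``query to the adjacency array of $H'$'' unfolds, through the map $g_{u_b}$ of the previous subsection and the bipartite doubling $H\to H'$, into one query to some $f_u$ of $G$ plus $\text{polylog}(n)$ classical work (so the stated bound is genuinely in terms of the given input); and that carrying the whole construction out coherently in superposition over $v$ is legitimate, since the oracle already accepts the vertex label as an input register and the degrees $d_v$ are part of the free input.
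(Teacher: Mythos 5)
Your proposal is correct, and while it shares the paper's outer structure (realise the diffusion as $W S_0 W^{-1}$ where $W:\ket{v}\ket{0}\mapsto\ket{v}\ket{\phi_v}$ and $S_0$ is a cheap phase flip on $\ket{0}$), the way you build $W$ is genuinely different at the step where the $\sqrt{d_m}$ cost arises. The paper queries once to form $\ket{\psi_v}=\frac{1}{\sqrt{d_v}}\sum_i\ket{i}\ket{f_v(i)}$ and then runs \emph{exact amplitude amplification} alternating reflections about $\ket{\psi_v}$ and about $\ket{+_v}\otimes I$, using the exactly known overlap $1/\sqrt{d_v}$ (degrees are free input) to land on $\ket{+_v}\ket{\phi_v}$ with certainty; each iteration re-invokes the oracle to implement the reflection about $\ket{\psi_v}$, giving $\Theta(\sqrt{d_v})$ queries. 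You instead prepare the index superposition query-free, spend a single query to load the neighbours, and then disentangle the index register by exactly Grover-inverting the injective map $f_v$ (unique marked element, known domain size) and cancelling the index against the recovered preimage. Both routes lean on exactness guaranteed by the same model assumptions (free degrees, no multigraph) and cost $O(\sqrt{d_v})$ queries; yours has the nice feature that the erasure acts componentwise on $\ket{i}\ket{f_v(i)}$, so the non-uniform local state at $s$ needs no separate treatment beyond the initial weighted preparation, whereas the paper handles $s$ with a dedicated $FK$ construction; the paper's version avoids the auxiliary search register and the explicit inversion step. Two small things you gloss over should be stated: the phase flip must be restricted to first-register labels in $A$ (and, for $R_B$, must skip $t$, whose diffusion operator is the identity), which is a trivially computable control; and running the construction in superposition over $v$ with varying $d_v$ requires fixing the circuit length by padding every vertex's amplification/search to $O(\sqrt{d_m})$ iterations with identities, the same device the paper makes explicit. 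Neither affects correctness or the stated bounds.
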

The proof of this lemma is presented in Appendix \ref{app:diff_ops}. Since the quantum walk algorithm requires $\tilde{O}(n)$ applications of $R_AR_B$, then by Lemma \ref{lem:reflect} the total time required is $\tilde{O}(n\sqrt{d_m})$. If we are given no promise on the maximum degree of the graph, then in the worst case the algorithm will take time $\tilde{O}(n^{3/2})$, which matches the time complexity of the algorithm in the adjacency matrix model.

\subsection{Lower Bounds}
We provide $\Omega(n)$ quantum query lower bounds, which follow from almost the same reduction used by D{\" u}rr et al. in \cite{durr2004quantum} to prove a lower bound on s-t connectivity in the array model -- namely a reduction from the {\sc Parity} problem. The {\sc Parity} problem is defined as follows: given a bit-string $x \in \{0,1\}^p$ of length $p$, are there an even or an odd number of bits set to 1? Alternatively, we might consider the bit-string $x$ to be the output of some function for each of the input integers $0...p-1$. 

We reproduce the reduction here, and show how it leads to lower bounds for both cycle detection and bipartiteness.

\begin{figure}[htbp]
\begin{center}
\input{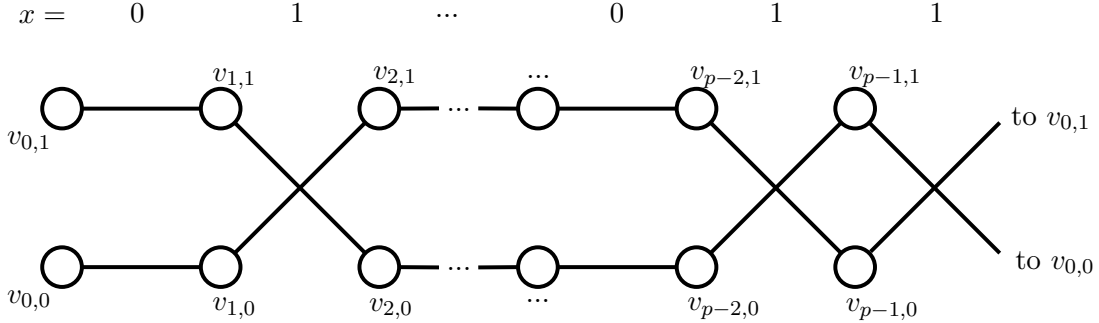}
\caption{Reduction from {\sc Parity} (similar to \cite{durr2004quantum})}
\label{fig:parity}
\end{center}
\end{figure}

\begin{lemma}
Bipartiteness testing and cycle detection both require $\Omega(n)$ queries in the adjacency array model.
\end{lemma}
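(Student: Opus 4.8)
The plan is to reduce the $N$-bit \textsc{Parity} problem to each of cycle detection and bipartiteness in the adjacency array model, with $N=\Omega(n)$, and in such a way that any one adjacency-array query to the constructed graph can be simulated by a single query to the \textsc{Parity} input. Since the quantum query complexity of $N$-bit \textsc{Parity} is $\Omega(N)$ (this is exactly the fact underlying the $\Omega(n)$ s-t connectivity lower bound of \cite{durr2004quantum}), the two $\Omega(n)$ bounds then follow immediately.

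For the construction I would use a slight variant of the graph of Figure~\ref{fig:parity}. Given $x\in\{0,1\}^{p-1}$, take $p$ two-vertex ``columns'' $\{v_{i,0},v_{i,1}\}$, $i\in\{0,\ldots,p-1\}$, so that $n=2p$; for each $i\in\{0,\ldots,p-2\}$ insert the ``parallel'' edges $\{v_{i,0},v_{i+1,0}\}$ and $\{v_{i,1},v_{i+1,1}\}$ if $x_i=0$, and the ``crossed'' edges $\{v_{i,0},v_{i+1,1}\}$ and $\{v_{i,1},v_{i+1,0}\}$ if $x_i=1$; finally add a single fixed edge $\{v_{0,0},v_{p-1,0}\}$. (Figure~\ref{fig:parity}, which D\"{u}rr et al.\ use to lower-bound s-t connectivity, is the same gadget but with wraparound edges in place of this one fixed edge; the modification makes \emph{acyclicity} rather than connectivity track the parity.) Without the fixed edge the graph is a disjoint union of two paths (``strands''), and following them shows that the strand starting at $v_{0,0}$ ends at $v_{p-1,a}$ with $a=x_0\oplus\cdots\oplus x_{p-2}=\textsc{Parity}(x)$, while the other strand runs from $v_{0,1}$ to $v_{p-1,1-a}$. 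Hence the fixed edge closes the first strand into a single cycle of length exactly $p$ when $\textsc{Parity}(x)=0$, and otherwise merely concatenates the two strands into one longer path, making $G_x$ a forest. Choosing $p$ odd makes this cycle (which by construction is the only one the graph can contain) odd whenever it is present, so on these instances $G_x$ contains a cycle $\iff G_x$ is not bipartite $\iff \textsc{Parity}(x)=0$.

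It then remains to check the simulation. Every vertex $v_{i,b}$ has degree at most $2$: its neighbour towards column $i+1$ (when $i<p-1$) is $v_{i+1,\,b\oplus x_i}$, its neighbour towards column $i-1$ (when $i>0$) is $v_{i-1,\,b\oplus x_{i-1}}$, and the remaining neighbour via the fixed edge — only at $v_{0,0}$ and $v_{p-1,0}$ — together with all of the degrees is hard-wired and independent of $x$. So each call to an adjacency-array function $f_{v_{i,b}}$ costs at most one query to $x$, and a $T$-query quantum algorithm deciding cycle detection (respectively bipartiteness) on $G_x$ yields a $T$-query quantum algorithm for \textsc{Parity} on $p-1=\Omega(n)$ bits, forcing $T=\Omega(n)$. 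I would also remark that $G_x$ has $m=2p-1<n$ edges and input-independent degrees, so these instances are neither discarded by the preliminary test rejecting graphs with $m\ge n$ nor resolved by the freely given degrees; and that $n$ can be made to take all sufficiently large values by padding with isolated vertices.

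I do not anticipate a real obstacle here. The only point needing a little care is the structural claim that the single fixed edge produces exactly one cycle — of length $p$, hence odd — precisely when $\textsc{Parity}(x)=0$, and no cycle at all otherwise; this reduces to the elementary fact that adding one edge to a forest creates at most one cycle, whose length is one more than the distance in the forest between its endpoints, together with a brief check of the two end columns, both of which are routine.
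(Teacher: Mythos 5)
Your proposal is correct and takes essentially the same route as the paper: a reduction from \textsc{Parity} via the D\"urr et al.\ two-level column gadget of Figure~\ref{fig:parity}, with each adjacency-array query simulated by at most one query to $x$ and the $\Omega(n)$ bound inherited from the parity lower bound. The only difference is cosmetic: you close the gadget with a single fixed chord and take $p$ odd, so one construction serves both cycle detection and bipartiteness, whereas the paper removes an arbitrary edge from the wraparound graph for cycle detection and adds a dummy bit to make the wraparound cycles odd for bipartiteness.
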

\begin{proof}
Let $x \in \{0,1\}^p$ be an instance of the parity problem. We construct a permutation $f$ on $\{v_{i,b} : i \in [p], b \in \{0,1\}\}$ which has exactly 1 or 2 cycles, depending on the parity of $x$. We define $f(v_{i,b}) = v_{i+1,b\oplus x_i}$ and $f(v_{i+1,b\oplus x_i}) = v_{i,b}$ (so that $f$ is symmetric), where the addition is modulo $2p$ (and $\oplus$ denotes addition modulo 2). 

The (undirected) graph defined by $f$ has two levels and $p$ columns, each corresponding to a bit of $x$ -- see Figure \ref{fig:parity}. A walk starting at vertex $v_{0,0}$ and using each edge at most once, will go from left to right, changing level whenever the corresponding bit in $x$ is 1. So when $x$ is even, the walk returns to $v_{0,0}$ while having only explored half of the graph, otherwise it returns to $v_{0,1}$, and then connects from there to $v_{0,0}$ by $p$ more steps.
\\
\\
If we were to arbitrarily remove a single edge from the graph defined by $f$, we would either have no cycle present (if $x$ is odd), or exactly one cycle present (when $x$ is even). Therefore, if we can detect cycles in this modified graph, then we can decide the parity of $x$. That is, after removing a single (arbitrary) edge, there is a cycle present if and only if the parity of $x$ is even. This gives a $\Omega(n)$ quantum query lower bound for cycle detection in the adjacency array model. 
\\
\\
In the case of bipartiteness, we ensure that $x$ has an odd number of bits by adding a `dummy' bit $x_p = 0$ if $p = 2m$ for some integer $m$. We fix this dummy bit to zero, so that it doesn't affect the parity of $x$. This has the effect of adding two additional vertices $v_{p,0}$ and $v_{p,1}$ to the graph such that $f(v_{p,0}) = v_{0,0}$ and $f(v_{p,1}) = v_{1,1}$ (and vice versa). After this modification, we have a single cycle of length $2p+2$ in the graph if $x$ is even, or two disjoint cycles of length $p+1$ if $x$ is odd. Since $p+1$ is an odd integer, there is an odd cycle in the graph if and only if $x$ is odd.

In the case where $p$ is not even, we do not add the dummy bit, and so we also have an odd cycle in the graph if and only if $x$ is odd. In both cases, the graph is bipartite if and only if the parity of $x$ is odd. This gives the required bound.
\end{proof}

Note that these lower bounds are actually tight, since the quantum query complexity of s-t connectivity is $\Theta(n)$ in the array model \cite{durr2004quantum}, implying the existence of $O(n)$ quantum query algorithms for both bipartiteness testing and cycle detection, which can be obtained by applying an $O(n)$ query algorithm for s-t connectivity to the ancillary graph used in the reduction from cycle detection and bipartiteness.

\section{Acknowledgements}
CC was supported by the EPSRC. AM was supported by an EPSRC Early Career Fellowship (EP/L021005/1). AB was supported by the ERC Advanced Grant MQC.


\appendix

\section{Span Programs}
\label{sec:span_programs_main}
Span programs are a linear algebraic model of computation, introduced by Karchmer and Wigderson in 1993 \cite{karchmer_span_1993}, that have many applications in classical complexity theory, and can be used to evaluate decision problems. Reichardt and \v{S}palek \cite{reichardt_span-program-based_2008} introduced a new complexity measure for span programs, the \emph{witness} size, which Reichardt later showed to have strong connections with quantum query complexity \cite{reichardt_span_2009, reichardt_reflections_2011}. In particular, he showed that the witness size of a span program and the query complexity of a quantum algorithm evaluating that span program are separated by at most a constant factor. This suggests that span programs may be useful for designing new quantum algorithms.

Span programs have been used to design quantum query algorithms for formula evaluation \cite{reichardt_span-program-based_2008, reichardt_span-program-based_2011,reichardt_faster_2011}, the matrix rank problem\cite{belovs_span-program-based_2011}, subgraph detection \cite{zhu_quantum_2012,belovs_span_2012-2,belovs_span_2012}, s-t connectivity \cite{belovs_span_2012}, and strong connectivity \cite{arins_span-program-based_2015}. For completeness, we briefly introduce this model; for further details, see \cite{karchmer_span_1993,belovs_span_2012}.

\subsection{Formal Definition}\label{sec:formal_sp}
A span program $\mathcal{P}$ takes as input an $n$-bit string $x \in \mathcal{D} \subseteq \{0,1\}^n$, and either accepts or rejects it. That is, it implements the (partial) boolean function $f_{\mathcal{P}} : \mathcal{D} \rightarrow \{0,1\}$. 

\begin{definition}
A span program is defined by a tuple $\mathcal{P} = (\mathcal{H}, \ket{\tau}, \{M_{i,b}\}, M_{\text{free}})$, where $\mathcal{H}$ is a finite-dimensional Hilbert space, $\ket{\tau} \in \mathcal{H}$ is the \emph{`target vector'}, $\{M_{i,b}\}$ is a set of sets of vectors for $i \in [n], b \in \{0,1\}$, where each $M_{i,b} \subseteq \mathcal{H}$ is a finite set of vectors which we will collectively call \emph{`input vectors'}, and $M_{\text{free}} \subseteq \mathcal{H}$ is a set of \emph{`free'} input vectors. \\
Given an input $x$, denote by $M(x) = \bigcup\{M_{i,b} : i \in [n], x_i = b\} \cup M_{\text{free}}$. Then the span program accepts if the target vector $\ket{\tau}$ can be written as a linear combination of the vectors in $M(x)$: \[ f(x) = 1 \iff \ket{\tau} \in \operatorname{span}(M(x)). \]
\end{definition}

Informally, the span program consists of sets of vectors that are either \emph{available} or \emph{unavailable}, depending on the input given to the span program. Generally speaking, we associate two sets of vectors to each input bit $x_i$, so that if $x_i = 1$, then the vectors in $M_{i,1}$ are available and those in $M_{i,0}$ are unavailable, and vice versa. The vectors contained in $M_{\text{free}}$ are always available, and any set $M_{i,b}$, $M_{\text{free}}$ may be empty.

\subsubsection{Witnesses}
\paragraph{Notation} Call $M(x)$ the set of available input vectors and let $d$ be the dimension of the Hilbert space $\mathcal{H}$ and $m$ be the total number of input vectors and free input vectors (also referred to as the `size' of $\mathcal{P}$). Write the set of all input vectors and free vectors as $\{\ket{v_j} : j \in [m]\}$. Finally, define $M := \sum_{j=1}^m \ket{v_j}\bra{j}$, which can be thought of as a matrix with all input vectors as columns.

\paragraph{Positive case} If $\mathcal{P}$ accepts $x$, then we can write $\ket{\tau}$ as a linear combination of available input vectors: \[\ket{\tau} = \sum_{v_j \in M(x)} w_j \ket{v_j} .\] Then the coefficients $w_j$ give a \emph{positive witness vector} for $x$, $\ket{w} = \sum_j w_j\ket{j}$, such that $M\ket{w} = \ket{\tau}$. The size of the witness is defined as $\|\ket{w}\|^2$. 

\paragraph{Negative case} If $\mathcal{P}$ rejects $x$, then it must not be possible to construct $\ket{\tau}$ using a linear combination of the available input vectors. Therefore, there must be some component of $\ket{\tau}$ that is orthogonal to all available input vectors. That is, there must exist some vector $\ket{w'}$ such that $\braket{w'|v_j} = 0$ for all $v_j \in M(x)$, and $\braket{w'|\tau} \neq 0$. In order for the witness size to be well defined, we require that $\braket{w'|\tau} = 1$. We call the vector $\ket{w'}$ the \emph{negative witness vector} for $x$.
The size of the witness is defined as 
\[
	\|M^\dag \ket{w'}\|^2 = \| \sum_{j=1}^m \ket{j}\braket{v_j|w'}\|^2 = \sum_{v_j \notin M(x)} |\braket{v_j|w'}|^2
\]
This equals the sum of the absolute squares of the inner products of $\ket{w'}$ with all \emph{unavailable} input vectors.

\paragraph{Witness size}
The witness size of $\mathcal{P}$ on input $x$, wsize($\mathcal{P}, x$), is defined as the minimum size among all witnesses for $x$. For domain $\mathcal{D} \subseteq \{0,1\}^n$, let \[ \text{wsize}_b(\mathcal{P}, \mathcal{D}) = \max_{x \in \mathcal{D}: f_\mathcal{P}(x) = b} \text{wsize}(\mathcal{P}, x) .\]
Then the witness size of $\mathcal{P}$ on domain $\mathcal{D}$ is defined as
\[
\text{wsize}(\mathcal{P}, \mathcal{D}) = \sqrt{\text{wsize}_0(\mathcal{P}, \mathcal{D}) \text{wisze}_1(\mathcal{P}, \mathcal{D})}
\]

\section{Span Program for s-t Connectivity}\label{sec:stspan}
We present here a span program for solving the problem of s-t connectivity, due to Belovs and Reichardt \cite{belovs_span_2012}. Formally, the problem is defined as follows: given an $n$-vertex graph $G = (V,E)$, and two vertices $s, t \in V$, is there a path from $s$ to $t$ in $G$?

\subsection{Span Program}
Define a span program $\mathcal{P}$ using the vector space $\mathbb{R}^n$, with an orthonormal basis $ \{\ket{v} : v \in V\}$ -- i.e. a basis vector for each vertex in $G$. We suppose that the input to the program is a bit string of the form $x_{ij}$ for $i,j \in V$, such that $x_{ij} = 1$ iff there is an edge $(i,j) \in E$.
 Then the span program is defined as follows:
\begin{itemize}
\item \textbf{Target Vector:} $\ket{\tau} = \ket{t} - \ket{s}$.
\item \textbf{Available Input Vectors:} For each edge $(u,v) \in E$, (i.e. $x_{uv} = 1$), we make available the input vector $\ket{v} - \ket{u}$.
\end{itemize}
There are no free input vectors. It might be useful to note that in this example, the set of all input vectors is $\{\ket{j} - \ket{i} : i \neq j \in V\}$, with an input vector corresponding to every possible edge that might occur in $G$, given the vertex set $V$. Therefore, the total number of input vectors is $m = {n \choose 2}$.

Now we prove correctness and calculate the witness sizes.

\paragraph{Positive Case} Suppose $s$ and $t$ are connected in $G$. Then there must exist some path of length, say, $d$ between them: $s=u_0, u_1, ... , u_d=t$. Then all of the vectors $\ket{u_1}-\ket{s}, \ket{u_2}-\ket{u_1}, ... , \ket{t}-\ket{u_{d-1}}$ are available. Simply adding all these vectors, each with unit weight, gives $\ket{t} - \ket{s} = \ket{\tau}$. Since the positive witness will consist of $d$ entries of $+1$, the positive witness size is $O(d)$.

\paragraph{Negative Case} Suppose that $s$ and $t$ are not connected in $G$, and instead lie in different connected subcomponents of $G$. We must show that a negative witness $\ket{w'}$ exists, such that $\braket{w'|\tau} = 1$, and $\braket{w'|v} = 0$ for all available input vectors $v$. We can define $\ket{w'}$ by its inner product on all basis vectors: let $\braket{w'|u} = 1$ if $u$ is in the same connected subcomponent as $t$, and $\braket{w'|u} = 0$ otherwise. Then we have that $\braket{w'|\tau} = \bra{w'}(\ket{t} - \ket{s}) = \braket{w'|t} - \braket{w'|s} = 1 - 0 = 1$. If an input vector of the form $\ket{v} - \ket{u}$ is available, then there is an edge between vertices $u$ and $v$ in $G$, and therefore both $u$ and $v$ belong to the same connected subcomponent in $G$. Therefore, $\braket{w'|v} = \braket{w'|u}$ in all such cases, and $\ket{w'}$ is orthogonal to all available input vectors. Since there are at most ${n \choose 2}=O(n^2)$ unavailable input vectors, and the inner product between $\ket{w'}$ and any basis vector is either 0 or 1, the negative witness size is $O(n^2)$. 

The span program $\mathcal{P}$ therefore has witness size $O(n\sqrt{d})$.

\section{Time Efficient Implementation of Span Program}\label{sec:impl_sp}
\subsection{Preliminaries}\label{sec:prelims}
We will require some facts about the eigenspaces of the product of two reflections. Let $A$ and $B$ be matrices each with $n$ rows and orthonormal columns. Let $\Pi_A = AA^\dag$ and $\Pi_B = BB^\dag$ be the projections onto the column spaces of $A$ and $B$, respectively. Let $R_A = 2\Pi_A - I$ and $R_B = 2\Pi_B - I$ be the reflections about the corresponding subspaces, and let $U = R_BR_A$ be their product. 

\begin{lemma}\label{lem:spec}\emph{(Spectral Lemma \cite{szegedy_quantum_2004})}. Under the above assumptions, all the singular values of $A^\dag B$ are at most 1. Let $\cos \theta_1, ..., \cos \theta_l$ be all the singular values of $A^\dag B$ lying in the open interval $(0,1)$, and let $\mathcal{C}(A)$ and $\mathcal{C}(B)$ denote the column spaces of $A$ and $B$, respectively. Then the following is a complete list of the eigenvalues of $U$:
\begin{itemize}
\item The $+1$ eigenspace is $(\mathcal{C}(A) \cap \mathcal{C}(B)) \oplus (\mathcal{C}(A)^\perp \cap \mathcal{C}(B)^\perp)$
\item The $-1$ eigenspace is $(\mathcal{C}(A) \cap \mathcal{C}(B)^\perp) \oplus (\mathcal{C}(A)^\perp \cap \mathcal{C}(B))$. Moreover, $\mathcal{C}(A)^\perp \cap \mathcal{C}(B) = B(\ker A^\dag B)$
\item On the orthogonal complement of the above subspaces, $U$ has eigenvalues $e^{2i\theta_j}$ and $e^{-2i\theta_j}$ for $j \in [l]$
\end{itemize}
\end{lemma}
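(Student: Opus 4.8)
The plan is to reduce the statement to a simultaneous block-decomposition of the two projectors $\Pi_A$ and $\Pi_B$ into mutually orthogonal invariant subspaces of dimension at most two (the content of Jordan's lemma), with the singular value decomposition of $A^\dagger B$ supplying that decomposition explicitly. First I would record the elementary bound: since $A$ and $B$ have orthonormal columns, both $A^\dagger$ and $B$ are contractions, so $\|A^\dagger B\|\le 1$ and every singular value lies in $[0,1]$. Write the SVD $A^\dagger B = \sum_j \sigma_j \ket{a_j}\bra{b_j}$ with $\{\ket{a_j}\}$, $\{\ket{b_j}\}$ orthonormal in the coefficient spaces of $A$, $B$. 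Using the identity $\|A^\dagger v\|=\|\Pi_A v\|$, one sees that $\sigma_j = 1$ forces $B\ket{b_j}\in \mathcal{C}(A)\cap\mathcal{C}(B)$, a right null direction ($\sigma_j=0$) gives $B\ket{b_j}\in\mathcal{C}(A)^\perp\cap\mathcal{C}(B)$, and a left null direction gives a vector in $\mathcal{C}(A)\cap\mathcal{C}(B)^\perp$. In particular, since $B$ is injective on its coefficient space, this already yields the claimed identity $\mathcal{C}(A)^\perp\cap\mathcal{C}(B) = B(\ker A^\dagger B)$.

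Next I would assemble the orthogonal direct sum decomposition
\[
\mathbb{R}^n = \big(\mathcal{C}(A)\cap\mathcal{C}(B)\big)\oplus\big(\mathcal{C}(A)\cap\mathcal{C}(B)^\perp\big)\oplus\big(\mathcal{C}(A)^\perp\cap\mathcal{C}(B)\big)\oplus\big(\mathcal{C}(A)^\perp\cap\mathcal{C}(B)^\perp\big)\oplus\bigoplus_{j\,:\,\sigma_j\in(0,1)}W_j,
\]
where $W_j := \operatorname{span}\big(A\ket{a_j},\,B\ket{b_j}\big)$ is two-dimensional. Mutual orthogonality of the $W_j$ and of each $W_j$ with the four ``corner'' spaces follows from the orthonormality in the SVD together with $\braket{Aa_j|Bb_k} = \braket{a_j|A^\dagger B|b_k} = \sigma_k\delta_{jk}$. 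That these pieces span everything is a dimension count: $\dim(\mathcal{C}(A)+\mathcal{C}(B)) = \dim\mathcal{C}(A)+\dim\mathcal{C}(B)-\dim(\mathcal{C}(A)\cap\mathcal{C}(B))$, where the SVD expresses $\dim\mathcal{C}(A)$ and $\dim\mathcal{C}(B)$ in terms of the corner dimensions and the number of $j$, and finally $(\mathcal{C}(A)+\mathcal{C}(B))^\perp = \mathcal{C}(A)^\perp\cap\mathcal{C}(B)^\perp$ accounts for the fourth corner.

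Finally I would compute $U = R_BR_A$ on each summand. On each corner space each of $R_A, R_B$ acts as $\pm I$, so $U = +I$ on $\mathcal{C}(A)\cap\mathcal{C}(B)$ and on $\mathcal{C}(A)^\perp\cap\mathcal{C}(B)^\perp$, and $U=-I$ on the two mixed corners; this is exactly the stated $\pm1$ eigenspaces. On each $W_j$, $R_A$ restricts to the planar reflection fixing the line $\mathbb{R}A\ket{a_j}$ and $R_B$ to the planar reflection fixing $\mathbb{R}B\ket{b_j}$ (because the orthogonal complement of $A\ket{a_j}$ inside $W_j$ lies in $\mathcal{C}(A)^\perp$, and symmetrically for $B$). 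Since the angle between these two axes has cosine $|\braket{Aa_j|Bb_j}| = \sigma_j = \cos\theta_j$, the composite of the two reflections is a rotation through $2\theta_j$, with eigenvalues $e^{\pm 2i\theta_j}$, completing the list.

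I expect the main obstacle to be the bookkeeping of the middle step --- checking that the $W_j$ together with the four corner spaces are pairwise orthogonal and exhaust $\mathbb{R}^n$ --- rather than any individual computation; once that decomposition is in place the action of $U$ on each block is immediate. A stylistically shorter route is to invoke Jordan's lemma as a black box, reducing directly to the analysis of a generic $2\times 2$ block, at the cost of citing an external structural result rather than deriving the decomposition from the SVD of $A^\dagger B$.
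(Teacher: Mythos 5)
Your proposal is correct, but there is nothing in the paper to compare it against: the paper imports this statement wholesale as Szegedy's Spectral Lemma (citing \cite{szegedy_quantum_2004}) and never proves it, using it only as a black box in Appendix \ref{sec:impl_sp} to relate the $-1$ eigenspace of $R_BR_A$ to $B(\ker A^\dag B)$ and the phase gap of $U$ to the singular-value gap of $A^\dag B$. Your argument is the standard derivation underlying the cited result: the key identities $\|A^\dag v\|=\|\Pi_A v\|$ and $\braket{Aa_j|Bb_k}=\sigma_k\delta_{jk}$ are right, the identification of the four corner subspaces with the singular directions at $\sigma=1$, the left/right null spaces, and $(\mathcal{C}(A)+\mathcal{C}(B))^\perp$ is correct (and gives $\mathcal{C}(A)^\perp\cap\mathcal{C}(B)=B(\ker A^\dag B)$), and the two-dimensional blocks $W_j=\operatorname{span}(A\ket{a_j},B\ket{b_j})$ are indeed invariant under both reflections, on which $U$ acts as a rotation by $2\theta_j$ with $\cos\theta_j=\sigma_j$. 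The only places needing a few extra lines are the ones you flag yourself: checking $\Pi_A(B\ket{b_j})=\sigma_j A\ket{a_j}$ and $\Pi_B(A\ket{a_j})=\sigma_j B\ket{b_j}$ so that the blocks are genuinely invariant and the reflections restrict to planar reflections, and the dimension count showing the corners plus the $W_j$ exhaust the space (which uses that each corner equals the image of the corresponding singular subspace, e.g.\ $\mathcal{C}(A)\cap\mathcal{C}(B)^\perp=A(\ker B^\dag A)$). These are routine consequences of the same SVD relations, so the plan goes through; the alternative you mention, quoting Jordan's lemma directly, would simply replicate the paper's own strategy of citing an external structural result rather than deriving it.
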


\begin{lemma}[Effective Spectral Gap Lemma \cite{lee_quantum_2011}]\label{lem:eff_spec_gap} 
\textit{Let $P_\Theta$ be the orthogonal projection onto the span of all eigenvectors of $U$ with eigenvalues $e^{i\theta}$ such that $|\theta| \leq \Theta$. Then, for any vector $\ket{w}$ in the kernel of $\Pi_A$, we have
\begin{equation*}
\|P_\Theta\Pi_B\ket{w}\| \leq \frac{\Theta}{2}\|\ket{w}\|
\end{equation*}
} \\
\end{lemma}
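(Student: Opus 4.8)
The plan is to diagonalise $U$ and bound the overlap of $\Pi_B\ket{w}$ with each eigenvector of $U$ separately, using the hypothesis $\Pi_A\ket{w}=0$ to convert the action of $R_B$ on $\ket{w}$ into the action of $U$ on $\ket{w}$.

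First I would record two elementary facts. Since $\ket{w}$ lies in the kernel of $\Pi_A$ and $R_A = 2\Pi_A - I$, we have $R_A\ket{w} = -\ket{w}$. Since $U = R_BR_A$ is a product of two unitary reflections it is itself unitary, hence admits an orthonormal eigenbasis $\{\ket{\psi_k}\}$ with $U\ket{\psi_k} = e^{i\theta_k}\ket{\psi_k}$; unitarity also gives $\bra{\psi_k}U = e^{i\theta_k}\bra{\psi_k}$.

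The main computation is short. Writing $\Pi_B = \tfrac12(I + R_B)$ and using $R_B = UR_A$ (which follows from $U = R_BR_A$ and $R_A^2 = I$), for each eigenvector we get
\[
\braket{\psi_k|\Pi_B|w} = \tfrac12\braket{\psi_k|w} + \tfrac12\braket{\psi_k|UR_A|w} = \tfrac12\braket{\psi_k|w} - \tfrac12 e^{i\theta_k}\braket{\psi_k|w} = \tfrac12\left(1 - e^{i\theta_k}\right)\braket{\psi_k|w},
\]
whence $\bigl|\braket{\psi_k|\Pi_B|w}\bigr| = \bigl|\sin(\theta_k/2)\bigr|\,\bigl|\braket{\psi_k|w}\bigr| \le \tfrac{|\theta_k|}{2}\bigl|\braket{\psi_k|w}\bigr|$, using $|1 - e^{i\theta}| = 2|\sin(\theta/2)| \le |\theta|$. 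Finally I would expand $P_\Theta\Pi_B\ket{w}$ in the eigenbasis and keep only the terms with $|\theta_k| \le \Theta$:
\[
\|P_\Theta\Pi_B\ket{w}\|^2 = \sum_{k\,:\,|\theta_k|\le\Theta}\bigl|\braket{\psi_k|\Pi_B|w}\bigr|^2 \le \frac{\Theta^2}{4}\sum_{k\,:\,|\theta_k|\le\Theta}\bigl|\braket{\psi_k|w}\bigr|^2 \le \frac{\Theta^2}{4}\,\|\ket{w}\|^2,
\]
and take square roots to obtain the claim.

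I do not expect a real obstacle here: the argument is essentially a one-line spectral estimate once the hypothesis is used in the form $R_A\ket{w}=-\ket{w}$. The only points requiring care are the bra/ket bookkeeping of eigenvalue phases (that $\bra{\psi_k}U = e^{i\theta_k}\bra{\psi_k}$, not $e^{-i\theta_k}$), the elementary bound $|1 - e^{i\theta}| \le |\theta|$, and --- if one wanted the statement for $U$ with continuous spectrum --- replacing the sum over $k$ by a spectral integral against $P_\Theta$; in the finite-dimensional setting of this appendix the eigenbasis expansion suffices.
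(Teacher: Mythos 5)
Your proof is correct. Note that the paper itself does not prove this lemma --- it is quoted verbatim from the cited reference \cite{lee_quantum_2011} --- so there is no internal proof to compare against; your argument is essentially the standard one from that literature. The key steps all check out: $\Pi_A\ket{w}=0$ gives $R_A\ket{w}=-\ket{w}$, the identity $R_B=UR_A$ together with $\bra{\psi_k}U=e^{i\theta_k}\bra{\psi_k}$ yields $\braket{\psi_k|\Pi_B|w}=\tfrac12\bigl(1-e^{i\theta_k}\bigr)\braket{\psi_k|w}$, and the elementary bound $|1-e^{i\theta}|=2|\sin(\theta/2)|\le|\theta|$ combined with Parseval gives the stated inequality, with the finite-dimensional setting of the appendix making the eigenbasis expansion legitimate.
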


We will also require the following tools, which have been used many times elsewhere in quantum algorithm design:
\begin{theorem}[Phase estimation \cite{kitaev_quantum_1995}\cite{cleve_quantum_1998}]\label{phase_estimation}
Given a unitary $U$ as a black box, there exists a quantum algorithm that, given an eigenvector $\ket{\psi}$ of $U$ with eigenvalue $e^{i\phi}$, outputs a real number $w$ such that $|w - \phi| \leq \delta$ with probability at least $9/10$. The algorithm uses $O(1/\delta)$ controlled applications of $U$ and $\frac{1}{\delta}\text{\emph{polylog}}(1/\delta)$ other elementary operations.
\end{theorem}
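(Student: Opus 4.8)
The plan is to implement the textbook phase-estimation circuit of Kitaev and of Cleve et al., and then carry out the standard error and gate-count analysis. First I would introduce an ancilla register of $t := \lceil \log_2(1/\delta)\rceil + c$ qubits, for a suitable absolute constant $c$, initialise it to $\ket{0}^{\otimes t}$, and apply a Hadamard to each qubit to obtain $\frac{1}{\sqrt{2^t}}\sum_{k=0}^{2^t-1}\ket{k}$. I would then apply the map $\ket{k}\ket{\psi}\mapsto\ket{k}(U^k\ket{\psi})$; writing $k$ in binary, this is effected by the $t$ gates ``controlled-$U^{2^j}$'' for $j=0,\dots,t-1$, and since we are given controlled applications of $U$ as a black box, each controlled-$U^{2^j}$ is simply $2^j$ controlled-$U$ gates sharing the same control qubit. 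The total number of controlled applications of $U$ is therefore $\sum_{j=0}^{t-1}2^j = 2^t - 1 = O(1/\delta)$. Because $\ket{\psi}$ is an eigenvector with $U\ket{\psi}=e^{i\phi}\ket{\psi}$, the ancilla--eigenvector state is now $\frac{1}{\sqrt{2^t}}\sum_k e^{ik\phi}\ket{k}\otimes\ket{\psi}$.

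Next I would apply the inverse quantum Fourier transform to the ancilla register and measure it, interpreting the outcome $m\in\{0,\dots,2^t-1\}$ as the estimate $w := 2\pi m/2^t$. Correctness then reduces to the classical tail estimate for this circuit: writing $2^t\phi/(2\pi) = b + f$ with $b$ the nearest integer and $|f|\le \tfrac12$, the post-transform amplitude on outcome $(b+\ell)\bmod 2^t$ is a geometric sum of magnitude $O\!\big(1/|\ell-f|\big)/2^t$, so summing the tails over $|\ell|\ge 2^{c-1}$ bounds the probability that $w$ differs from $\phi$ by more than $\delta$ by $O(2^{-c})$, which is at most $1/10$ once $c$ is a large enough constant; equivalently one may keep $c$ small and boost from constant success probability to $9/10$ by $O(1)$ independent repetitions, returning the median. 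For the remaining cost, the Hadamards contribute $O(t)$ gates, the inverse QFT contributes $O(t^2)=O(\log^2(1/\delta))$ gates, and no further control logic is needed around the controlled-$U$ gates, so the number of other elementary operations is $\text{polylog}(1/\delta)$, comfortably within the stated $\tfrac{1}{\delta}\,\text{polylog}(1/\delta)$ bound.

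The only step that is not pure bookkeeping is the tail bound on the Fourier amplitudes, and the delicate point there is that $\phi$ need not be an exact multiple of $2\pi/2^t$, so the state before measurement is spread over many outcomes rather than concentrated on one; one has to estimate a sum of the form $\sum_{|\ell|\ge 2^{c-1}} 1/(\ell-f)^2$ and translate ``the measured index lies within $2^{c-1}$ of $b$'' into ``$|w-\phi|\le\delta$'' using the definition $t=\lceil\log_2(1/\delta)\rceil+c$. I would also note in passing that although the statement assumes $\ket{\psi}$ is an exact eigenvector, by linearity the same circuit run on a superposition of eigenvectors returns, upon measuring the ancilla, an estimate drawn from the mixture of the corresponding phases --- which is the form in which phase estimation is actually invoked later in the paper --- but that extension is not required for the statement as written.
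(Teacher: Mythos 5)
Your proposal is correct, and it is precisely the standard Kitaev/Cleve--Ekert--Macchiavello--Mosca phase-estimation construction and tail analysis that the paper itself relies on: the paper states this theorem as a black box with citations and gives no proof of its own, so there is no divergence in approach. One small phrasing quibble: the amplitude on outcome $b+\ell$ is $O\bigl(1/|\ell-f|\bigr)$ (the geometric sum has magnitude $O\bigl(2^{t}/|\ell-f|\bigr)$ before the $1/2^{t}$ normalisation), not $O\bigl(1/|\ell-f|\bigr)/2^{t}$, but the tail estimate $\sum_{|\ell|\ge 2^{c-1}} 1/(\ell-f)^{2} = O(2^{-c})$ that you actually use is the correct consequence, so the argument stands as written.
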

\begin{theorem}[Reflection using phase estimation \cite{magniez2011search}]\label{theo:phase_reflect}
Let $U\in U(n)$ have a unique eigenvector with eigenvalue 1, and let the smallest non-zero phase of $U$ be $\sigma_{\min}$. Then for any integer $k$ there exists a quantum circuit $R$ that acts on $O(\log_2 n) + ks$ qubits, where $s = \log_2\left(\frac{1}{\sigma_{\min}}\right) + O(1)$, such that:
\begin{itemize}
\item  $R$ uses the controlled-$U$ operator  $O(k2^{s})$ times and contains $O(ks^2)$ other gates.
\item If $\ket{\psi}$ is the unique $1$-eigenvector of $U$, then $R\ket{\psi}\ket{0^{ks}} = \ket{\psi}\ket{0^{ks}}$.
\item If $\ket{\phi}$ lies in the subspace orthogonal to $\ket{\psi}$, then $\|(R + I)\ket{\phi}\ket{0^{ks}}\| = O(1/2^{k})$.
\end{itemize}
\end{theorem}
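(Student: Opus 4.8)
The plan is to realise $R$ as the standard ``approximate reflection about the $1$-eigenspace'' built from phase estimation on $U$. Let $V$ be the phase-estimation circuit for $U$ using a single $s$-qubit register; on an eigenvector $\ket{u}$ of $U$ with eigenvalue $e^{i\theta_u}$ it produces $V\ket{u}\ket{0^s}=\ket{u}\otimes\sum_b a^{(u)}_b\ket{b}$, where the $a^{(u)}_b$ are the usual phase-estimation amplitudes. First I would run $k$ independent copies of $V$ in parallel on $k$ disjoint $s$-qubit registers, then apply a global phase $-1$ to every branch in which at least one of the $k$ registers is \emph{not} in the all-zeros state (equivalently, reflect about the subspace in which all $k$ registers equal $\ket{0^s}$), and finally uncompute with $(V^\dagger)^{\otimes k}$. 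Writing $P_0$ for the projector onto ``all $k$ phase registers equal $\ket{0^s}$'', this circuit is $R=(V^\dagger)^{\otimes k}(2P_0-I)V^{\otimes k}=2\,(V^\dagger)^{\otimes k}P_0V^{\otimes k}-I$, so that $R+I=2\,(V^\dagger)^{\otimes k}P_0V^{\otimes k}$.

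Next I would analyse $R$ eigenvector by eigenvector. On the $1$-eigenvector $\ket{\psi}$ phase estimation is exact, $a^{(\psi)}_0=1$ and $V\ket{\psi}\ket{0^s}=\ket{\psi}\ket{0^s}$, so $V^{\otimes k}\ket{\psi}\ket{0^{ks}}=\ket{\psi}\ket{0^{ks}}$ is fixed both by $2P_0-I$ and by the uncomputation, giving $R\ket{\psi}\ket{0^{ks}}=\ket{\psi}\ket{0^{ks}}$. For an eigenvector $\ket{u}$ with $\theta_u\neq 0$ the key point is the Dirichlet-kernel estimate
\[
|a^{(u)}_0|=\frac{1}{2^s}\,\frac{|\sin(2^{s-1}\theta_u)|}{|\sin(\theta_u/2)|}\le\frac{1}{2^s\,|\sin(\theta_u/2)|}\le\frac{\pi}{2^s\,\sigma_{\min}},
\]
which is at most $1/2$ once $s=\log_2(1/\sigma_{\min})+O(1)$ (using $|\sin(\theta/2)|\ge|\theta|/\pi$ for $|\theta|\le\pi$ and $|\theta_u|\ge\sigma_{\min}$). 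Since the $k$ copies act independently, $P_0V^{\otimes k}\ket{u}\ket{0^{ks}}=(a^{(u)}_0)^k\,\ket{u}\ket{0^{ks}}$, and because $V$ is block-diagonal in the eigenbasis of $U$, the vectors $(V^\dagger)^{\otimes k}\ket{u}\ket{0^{ks}}$ attached to distinct eigenvalues are mutually orthogonal. Expanding a unit vector $\ket{\phi}=\sum_u c_u\ket{u}$ over eigenvectors orthogonal to $\ket{\psi}$ then yields
\[
\bigl\|(R+I)\ket{\phi}\ket{0^{ks}}\bigr\|^2=4\sum_u|c_u|^2\,|a^{(u)}_0|^{2k}\le 4\cdot 2^{-2k},
\]
so $\|(R+I)\ket{\phi}\ket{0^{ks}}\|=O(2^{-k})$, as required.

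For the resource count, one copy of $V$ applies the controlled operators $U,U^2,U^4,\dots,U^{2^{s-1}}$, costing $\sum_{j=0}^{s-1}2^j=O(2^s)$ applications of controlled-$U$, followed by an inverse QFT on $s$ qubits costing $O(s^2)$ gates; hence $V^{\otimes k}$ and $(V^\dagger)^{\otimes k}$ together use $O(k2^s)$ controlled-$U$ operators and $O(ks^2)$ other gates. The reflection $2P_0-I$ is an OR of $ks$ bits followed by a controlled phase, contributing $O(ks)$ extra gates and $O(1)$ scratch qubits, which is absorbed into the bounds above. The circuit therefore acts on the $O(\log_2 n)$-qubit system register together with the $k$ phase registers of $s$ qubits each, i.e. on $O(\log_2 n)+ks$ qubits.

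I expect the only delicate point to be pinning down the additive $O(1)$ in $s=\log_2(1/\sigma_{\min})+O(1)$: one has to force the single-shot ``false-zero'' amplitude $|a^{(u)}_0|$ strictly below the constant $1/2$ (any fixed constant $<1$ works, at the cost of replacing $2^{-k}$ by $c^{-k}$), which is exactly what the Dirichlet-kernel bound above provides; and one must verify that the contributions of the different eigenspaces combine in quadrature rather than linearly, which is precisely what the block-diagonality of $V$ with respect to the eigendecomposition of $U$ secures. Everything else is routine bookkeeping.
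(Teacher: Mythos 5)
Your construction and analysis are correct, but note that the paper itself does not prove this statement: Theorem \ref{theo:phase_reflect} is imported verbatim from the cited reference \cite{magniez2011search}, where the proof is exactly the argument you give — $k$ parallel $s$-qubit phase estimations of $U$ sharing the system register, a phase flip on the complement of the all-zeros ancilla subspace, and uncomputation, with the Dirichlet-kernel bound $|a_0^{(u)}|\le \pi/(2^s\sigma_{\min})\le 1/2$ forcing the error amplitude down to $2^{-k}$, and the terms for distinct eigenvectors adding in quadrature because the images of the orthonormal vectors $\ket{u}\ket{0^{ks}}$ under the unitary uncomputation remain orthonormal (your appeal to block-diagonality is more than is needed). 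The only place where you are slightly cavalier is the resource count for the reflection about $\ket{0^{ks}}$: a $ks$-controlled phase with only $O(1)$ clean scratch qubits is not automatically $O(ks)$ elementary gates, though it can be kept within the stated $O(ks^2)$ budget (e.g.\ by handling the $k$ registers with per-register flag qubits or by borrowing the system register as dirty ancillas), so this is routine bookkeeping rather than a gap.
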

The latter point of Theorem \ref{theo:phase_reflect} tells us that the circuit $R$ implements a reflection about the eigenvalue-$1$ eigenspace of $U$ up to some precision $2^{-k}$, determined by the value of $k$. In particular, if $U$ has a constant spectral gap, then $s = O(1)$ and the number of calls to the controlled-$U$ operator is $O(k)$, and depends only on our desired precision for the circuit. 
\\
\\

\subsection{Implementing Span Programs}
In this section we outline a general method, due to Belovs and Reichardt \cite{belovs_span_2012}, for implementing span programs in a time-efficient manner, and apply it to the span program for evaluating s-t connectivity. Before we do so, however, it will be useful to describe a quantum query algorithm for evaluating span programs and, in doing so, note an interesting connection between span programs and quantum query complexity:

\begin{theorem}[Reichardt \cite{reichardt_span_2009}]\label{theo:reichardt_query}
For any (partial) boolean function $f : \mathcal{D} \rightarrow \{0,1\}$, there is a quantum algorithm that requires $O(\text{wsize}(\mathcal{P},\mathcal{D}))$ queries to a quantum oracle for the bits of $x$, where $\mathcal{P}$ is any span program for which $f_\mathcal{P}$ agrees with f on the domain $\mathcal{D}$.
\end{theorem}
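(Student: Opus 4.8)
The plan is to instantiate the standard ``span program $\to$ quantum query algorithm'' template: from $\mathcal{P}$ I would build a product of two reflections $U = R_B R_A$ on a slightly enlarged space, analyse its spectrum near phase $0$ using the Spectral Lemma~\ref{lem:spec} and the Effective Spectral Gap Lemma~\ref{lem:eff_spec_gap}, and then decide $f_{\mathcal{P}}(x)$ by running phase estimation (Theorem~\ref{phase_estimation}) on $U$ from a fixed starting state. Writing the input and free vectors of $\mathcal{P}$ as $\{\ket{v_j}:j\in[m]\}$, I would first rescale the target: replacing $\ket{\tau}$ by $\ket{\tau}/\sqrt{\text{wsize}_1(\mathcal{P},\mathcal{D})}$ multiplies every positive witness size by $1/\text{wsize}_1$ and every negative witness size by $\text{wsize}_1$, so afterwards $\text{wsize}_1(\mathcal{P},\mathcal{D})\le 1$ and $\text{wsize}_0(\mathcal{P},\mathcal{D})\le W^2$, where $W := \text{wsize}(\mathcal{P},\mathcal{D})$. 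Adjoin one fresh basis vector $\ket{0}$ to $\mathbb{C}^m$ and set $\tilde{M}\ket{j}=\ket{v_j}$, $\tilde{M}\ket{0}=-\ket{\tau}$. Let $\Pi_A$ be the (input-independent) projection onto $\ker\tilde{M}$ and $\Pi_B$ the (input-dependent) projection onto $\operatorname{span}(\{\ket{0}\}\cup\{\ket{j}:\ket{v_j}\text{ available on }x\})$, and put $R_A=2\Pi_A-I$, $R_B=2\Pi_B-I$, $U=R_BR_A$. The algorithm starts in the unit vector $\ket{0}$, which always lies in $\operatorname{range}\Pi_B$. Implementing $R_A$ needs no queries (the subspace $\ker\tilde M$ is fixed), and $R_B$ needs only $O(1)$ queries to the oracle for $x$ (for each label $j\in M_{i,b}$ query $x_i$, apply a phase iff $x_i\ne b$, uncompute), so each use of $U$ costs $O(1)$ queries.

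For the positive case, if $f_{\mathcal{P}}(x)=1$ take a positive witness $\ket{w}=\sum_j w_j\ket{j}$, supported on available labels, with $\|\ket{w}\|^2\le 1$ after rescaling. Then $\tilde M(\ket{0}+\ket{w})=-\ket{\tau}+\sum_j w_j\ket{v_j}=0$, so $\ket{0}+\ket{w}\in\ker\tilde M\cap\operatorname{range}\Pi_B$, which by Lemma~\ref{lem:spec} lies in the $+1$-eigenspace of $U$; hence $\ket{0}$ has squared overlap at least $|\braket{0|(\ket0+\ket w)}|^2/\|\ket0+\ket w\|^2 = 1/(1+\|\ket w\|^2)\ge 1/2$ with that eigenspace. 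For the negative case, take a negative witness $\ket{w'}$ with $\braket{w'|\tau}=1$, $\braket{w'|v_j}=0$ on available $j$, and $\sum_{j\text{ unavailable}}|\braket{v_j|w'}|^2\le W^2$. Then $\ket{\hat w}:=-\tilde M^{\dagger}\ket{w'}=\ket{0}-\sum_{j\text{ unavailable}}\overline{\braket{v_j|w'}}\ket{j}$ lies in $\operatorname{range}\tilde M^{\dagger}=(\ker\tilde M)^{\perp}=\ker\Pi_A$, has all of its non-$\ket0$ weight on unavailable labels so that $\Pi_B\ket{\hat w}=\ket{0}$, and satisfies $\|\ket{\hat w}\|^2=1+\sum_{j\text{ unavailable}}|\braket{v_j|w'}|^2\le 1+W^2$. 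The Effective Spectral Gap Lemma applied to $\ket{\hat w}\in\ker\Pi_A$ then gives, for any threshold $\Theta$, $\|P_\Theta\ket{0}\| = \|P_\Theta\Pi_B\ket{\hat w}\| \le \tfrac{\Theta}{2}\|\ket{\hat w}\| \le \Theta W$ (using $W\ge 1$); choosing $\Theta=1/(3W)$ makes this at most $1/3$.

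The algorithm is then: run phase estimation on $U$ starting from $\ket{0}$ with precision $\delta<\Theta/4$, and accept iff the estimated phase has magnitude at most $\Theta/2$. In the positive case $\ket0$ has a component of squared norm $\ge 1/2$ on the exact $0$-phase eigenspace, which phase estimation reports within $\delta$ with probability $\ge 9/10$, so the algorithm accepts with probability bounded below by a fixed constant. In the negative case an estimate of magnitude $\le\Theta/2$ forces either weight on eigenphases $|\phi|\le\Theta$ (squared weight $\le 1/9$) or a phase-estimation error exceeding $\Theta/2 > 2\delta$, which happens with small probability; so the negative acceptance probability is bounded away from the positive one. A constant number of repetitions and a majority vote then decide $f_{\mathcal{P}}(x)$ correctly with probability $\ge 2/3$, and since each run uses $O(1/\delta)=O(W)$ applications of $U$ and hence $O(W)$ queries, the total is $O(\text{wsize}(\mathcal{P},\mathcal{D}))$ queries.

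The main obstacle is the negative-case analysis: one has to recognise the negative witness as (minus) the $\tilde M^{\dagger}$-image of a vector lying in $\ker\Pi_A$ and sent to $\ket0$ by $\Pi_B$, which is exactly the hypothesis of the Effective Spectral Gap Lemma --- the rest is bookkeeping of the rescaling constants and a routine phase-estimation argument. A secondary point of care is that the two cases are separated only by ``phase $=0$'' versus ``phase $\gtrsim 1/W$'' rather than by a hard spectral gap, so phase estimation's own additive error $\delta$ must be fixed a constant factor below the threshold $\Theta$, and the small-but-nonzero tail probability of phase estimation must be absorbed by the repetition-and-majority step.
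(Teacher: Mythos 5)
Your proposal is correct and follows essentially the same route as the paper's proof: the matrix $\tilde{M}$ with the (rescaled) target adjoined as an extra column, reflections about $\ker\tilde{M}$ and the input-dependent span of $\ket{0}$ plus available labels, a positive-case $+1$-eigenvector built from the positive witness, the Effective Spectral Gap Lemma applied to $\tilde{M}^\dagger\ket{w'}$ in the negative case, precision $\Theta=\Theta(1/W)$, and hence $O(W)$ queries. The only differences are bookkeeping conventions — you rescale $\ket{\tau}$ (fixing the paper's constant $C$, hence your mild $W\geq 1$ assumption and the use of a threshold test rather than the paper's ``choose $C,C'$ large'' cleanup) — which do not change the argument.
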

\begin{proof} \emph{(From \cite{belovs_span_2012}, included for completeness)}

The quantum algorithm works in the space $\mathbb{R}^{m+1}$ with orthonormal basis elements $\{\ket{j}\}^{m}_{j=0}$. Let $I(x) = \{j : v_j \in M(x)\}$ be the indices corresponding to the available input vectors, and let $W_0$, $W_1$ and $W=\sqrt{W_0W_1}$ be, respectively, the negative witness size, positive witness size, and witness size of $\mathcal{P}$. 

 We perform phase estimation on the operator $U = (2\Lambda - I)(2\Pi_x - I)$, the product of two reflections about the images of the projection operators $\Lambda$ and $\Pi_x$, which are defined as follows: Let $\Lambda : \mathbb{R}^{m+1} \rightarrow \mathbb{R}^{m+1}$ be the orthogonal projection onto the kernel of $\tilde{M}$, where:

\[
\tilde{M} = \frac{1}{\alpha}\ket{\tau}\bra{0} \oplus M = \frac{1}{\alpha}\ket{\tau}\bra{0} + \sum^{m}_{j=1}\ket{v_j}\bra{j} \text{ for some } \alpha \in \mathbb{R} \text{ yet to be defined,}
\]
\[
\text{and let } \Pi_x : \mathbb{R}^{m+1} \rightarrow \mathbb{R}^{m+1} \text{ be defined by } \Pi_x = \ket{0}\bra{0} + \sum_{j \in I(x)} \ket{j}\bra{j} 
\]
The algorithm accepts $x$ if and only if, on the input of $\ket{0}$, phase estimation on $U$, with precision $\Theta$, outputs a phase of zero (corresponding to an eigenvalue of 1). We need to find values of $\alpha$ and $\Theta$ for which this procedure works. 

First we consider the positive case. Take an optimal positive witness $\ket{w} = \sum_j w_j\ket{j}$, and use it to construct an eigenvalue 1 eigenvector of $U$: let $\ket{u} = \alpha\ket{0} - \ket{w}  = \alpha\ket{0} - \sum_j w_j\ket{j}$. Since $\ket{u}$ consists only of $\ket{0}$ and the positive witness vector, we have $\Pi_x\ket{u} = \ket{u}$ and $\tilde{M}\ket{u} = \ket{\tau} - \ket{\tau} = 0$. So $\ket{u}$ is in the kernel of $\tilde{M}$, which implies that $\Lambda\ket{u} = \ket{u}$. Thus, $\ket{u}$ is an eigenvalue 1 eigenvector of $U$, and phase estimation on $\ket{u}$ will output a phase of zero with certainty. Therefore, the probability of phase estimation on $\ket{0}$ outputting a phase of zero depends on the overlap of $\ket{0}$ with $\ket{u}$, and is at least:

\begin{equation}\label{eq:p_accept}
\frac{|\braket{0|u}|^2}{\|\ket{u}\|^2} = \frac{\alpha^2}{\alpha^2 + \sum w_j^2} \geq \frac{1}{1 + W_1 / \alpha^2}.
\end{equation}
  
Therefore, if we choose $\alpha = C\sqrt{W_1}$, for some constant $C$, we can increase the value of $C$ to make this probability arbitrarily close to 1. In particular, we can ensure that $CW > 1$
 
Now consider the negative case. Let $\ket{w'}$ be an optimal negative witness, and define $\ket{v} = \alpha\tilde{M}^\dag\ket{w'}$. Since $\ket{w'}$ is a negative witness, we have $\braket{w'|\tau} = 1$ and so $\ket{v} = \ket{0} + \alpha M^\dag\ket{w'}$. So $\Pi_x\ket{v} = \ket{0}$ and 

\begin{equation}\label{eq:p_reject1}
\|\ket{v}\|^2 \leq 1 + \alpha^2W_0  = 1 +  C^2W_1W_0 = 1 + C^2W^2 \leq 2C^2W^2 
\end{equation}
since the negative witness size $W_0$ is defined by $W_0 = \|M^\dag\ket{w'}\|^2$, and the final inequality follows from the restriction that $CW > 1$.

Let $\Theta$ be the precision of the phase estimation algorithm, and let $P_\Theta$ be the projection operators onto the space of eigenvectors of $U$ of phase less than $\Theta$. So the probability that phase estimation outputs a phase of zero on the input $\ket{0}$ is $\|P_\Theta\ket{0}\|^2 = \|P_\Theta\Pi_x\ket{v}\|^2$. 

Using the Effective Spectral Gap Lemma (Lemma \ref{lem:eff_spec_gap}), we have that $\|P_\Theta\Pi_x\ket{v}\| \leq \frac{\Theta}{2}\|\ket{v}\|$, as long as $\Lambda\ket{v} = 0$. This last condition is easy to verify, since $\ket{v}$ lies in the image of $\tilde{M}^\dag$. So, we have

\begin{equation}\label{eq:p_reject2}
\|P_\Theta\ket{0}\| = \|P_\Theta\Pi_x\ket{v}\| \leq \frac{\Theta}{2}\|\ket{v}\| \leq \Theta CW
\end{equation}
and we can choose the precision to be $\Theta = 1/C'W$, for some constant $C'$. Therefore, we can choose a large value of $C'$ such that the probability $\|P_\Theta\ket{0}\|^2$ is arbitrarily small. 

The phase estimation algorithm on $U$ with precision $\Theta$ requires $O(1/\Theta)$ queries to $U$, and each query to $U$ requires only one query to the oracle for $x$. Therefore the algorithm evaluates the span program on an input $x$ with $O(1/\Theta) = O(W)$ queries to $x$.
\end{proof}

The above algorithm defines a unitary operator $U = (2\Lambda - I)(2\Pi_x - I)$, which is the product of two reflections - the first, $R_\Pi := (2\Pi_x - I)$, is an input dependent reflection, and the second, $R_\Lambda := (2\Lambda - I)$, is an input independent reflection. Since the algorithm requires repeated applications of this operator, the algorithm may only be implemented time-efficiently if we can implement both $R_\Pi$ and $R_\Lambda$ time-efficiently. $R_\Pi$ is generally quite straightforward to implement - since all it requires is some efficient map from the bits of the input to the corresponding input vectors. However, the implementation of $R_\Lambda$ is more subtle, and will be the main focus of the rest of this section.

\subsubsection{Implementing $R_\Pi$ and $R_\Lambda$}\label{sec:general_imp}
We begin by describing a general approach for implementing the two reflections, which is due to Belovs and Reichardt \cite{belovs_span_2012}. 

We consider the $d \times (m+1)$ matrix $\tilde{M} = \frac{1}{\alpha}\ket{\tau}\bra{0} + \sum_{j=1}^m \ket{v_j}\bra{j}$ as the biadjacency matrix for a bipartite graph on $d + m+1$ vertices, and run a Szegedy-type quantum walk~\cite{szegedy_quantum_2004} on it. The structure of this graph is as follows: we have two disjoint sets of vertices -- one consisting of a vertex for every basis vector in our vector space, and one consisting of a vertex for the target vector plus every input vector in the span program. An edge exists between two vertices when a basis vector makes up some non-zero component of one or more of the input/target vectors. 

To perform a quantum walk, we must factor $M$ into two sets of unit vectors: vectors $\ket{a_i} \in \mathbb{R}^m$ for each row $i \in [d]$ and vectors $\ket{b_j} \in \mathbb{R}^d$ for each column $j \in [m]$, so that $\braket{i|b_j}\braket{a_i|j} = M'_{ij}$, where $M'$ differs from $M$ only by a rescaling of its rows, since rescaling the rows of a matrix does not affect its nullspace.

Given such a factorisation, let $A = \sum_{i=1}^d(\ket{i} \otimes \ket{a_i})\bra{i}$ and $B = \sum_{j=1}^m (\ket{b_j} \otimes \ket{j})\bra{j}$, so that $A^\dag B = M'$. Let $R_A$ and $R_B$ be the reflections about the column spaces of $A$ and $B$, respectively. 

Embed $\mathcal{H}$ into $\tilde{\mathcal{H}} = \mathbb{R}^d \otimes \mathbb{R}^m$ using the isometry $B$. Then $R_\Lambda$ can be implemented on $B(\mathcal{H})$ as the reflection about the $-1$ eigenspace of $R_BR_A$. By Lemma \ref{lem:spec}, this eigenspace equals $(\mathcal{C}(A) \cap \mathcal{C}(B)^\perp) \oplus (\mathcal{C}(A)^\perp \cap \mathcal{C}(B))$, which is equal to $B(\ker A^\dag B) = B(\ker M)$ plus a part that is orthogonal to $\mathcal{C}(B)$ and is therefore irrelevant. The reflection about the $-1$ eigenspace of $R_BR_A$ can then be implemented using phase estimation, which will give us a reflection about the kernel of $M$ in the larger space $\tilde{\mathcal{H}}$, whose basis is given by $\ket{i} \otimes \ket{j}$ for $i \in [d], j \in [m]$. $R_\Pi$ may be implemented by reflections controlled by $j$ - i.e. given a state in $\tilde{\mathcal{H}}$, multiply the phase by $-1$ if $\ket{v_j}$ is an unavailable input vector.

Intuitively, $A$ and $B$ are matrices that give us the \emph{local spaces} for each vertex $\ket{a_i}$ and $\ket{b_j}$, respectively. By local space, we are referring to the neighbours of a vertex in the graph described by $M$. Therefore, the reflection about the column spaces of $A$ and $B$ are equivalent to reflections about the local spaces $\ket{a_i}$ and $\ket{b_j}$, controlled by columns $i$ and $j$, respectively.

The efficiency of the algorithm thus depends on two factors:
\begin{enumerate}
\item The implementation costs of $R_A$ and $R_B$. Since these reflections decompose into local reflections, they can be easier to implement than $R_\Lambda$. 
\item The spectral gap around the $-1$ eigenvalue of $R_BR_A$, on which the efficiency of the phase estimation sub-routine will depend. By Lemma \ref{lem:spec}, this gap is determined by the gap of $A^\dag B = M'$ around singular value 0.
\end{enumerate}

Since, for any given span program, $M'$ describes a bipartite graph, the reflections $R_A$ and $R_B$ can usually be implemented efficiently. To calculate the properties of the spectral gap around singular value $0$ of $M'$, we may calculate the spectral gap around the eigenvalue $0$ of $\Delta := M'M'^\dag$ (since the singular values of $M'$ are the square roots of the eigenvalues of $\Delta$). 

We use phase estimation to perform the reflection about the $-1$ eigenspace of $R_BR_A$. If the smallest non-zero singular value of $M'$ is $\sigma_{\min}$, then by Theorem \ref{theo:phase_reflect} this will require $O((1/\sigma_{\min})\log(1/\delta))$ controlled applications of $R_A$ and $R_B$, plus $O(\log(1/\delta) \text{polylog}(1/\sigma_{\min}))$ other elementary operations, where $\delta$ is the precision of the circuit. Thus, if $R_A$ takes time $T_A$ and $R_B$ takes time $T_B$, the entire process will require time $\tilde{O}(\frac{T_A+T_B}{\sigma_{\min}})$ for constant $\delta$. If $M'$ has a constant spectral gap (i.e. $\sigma_{\min} = \Omega(1)$), the time required to implement the span program depends only on the complexity of the reflections $R_A$ and $R_B$.

\subsection{Implementation of the s-t connectivity span program}\label{app:stconn}
In this section we will apply the general approach described above to the span program for s-t connectivity as described in section \ref{sec:stspan}. Recall that we are given as input a graph $G=(V,E)$, and the indices of two vertices $s$ and $t$ from $V$, and we want to know whether or not there is a path from $s$ to $t$ in $G$. We will provide a proof for Theorem \ref{ther:belovs}, which we restate here for convenience:

\therbelovs*

\begin{proof}
In order to make the implementation straightforward, we modify the span program from section \ref{sec:stspan} slightly. We assume that $s$ and $t$ are not directly connected by an edge -- a fact that can be checked in $O(n)$ time beforehand, if necessary. Then, alongside the normal scaled-down target vector $\ket{\tilde{\tau}} = \frac{1}{\alpha}(\ket{t} - \ket{s})$, we introduce a `never-available' input vector $\ket{\tilde{\sigma}} = \sqrt{1 - 1/\alpha^2}(\ket{t}-\ket{s})$. We may assume that $\alpha = C_1\sqrt{W_1} \geq 1$. To introduce a never-available input vector, we introduce a dummy input bit that is always set to zero, and associate the input vector with it. It is easy to verify that this modification changes neither the behaviour of the span program nor its witness size.
\\
\\
Define the set of input vectors (not including the one corresponding to an edge between $s$ and $t$)
\[
M_{in} := \{\ket{v_{xy}} := \ket{y}-\ket{x}: x \neq y \in V\} \setminus \{\ket{v_{st}}\},
\]
and let $\{\ket{xy} : \ket{v_{xy}} \in M_{in}\} \cup \{\ket{st}, \ket{\overline{st}}\}$ be an orthonormal basis for the set of indices of vectors in $M_{in}$, with the extra vectors $\ket{st}$ and $\ket{\overline{st}}$ indexing the scaled target vector $\ket{\tilde{\tau}}$ and the `never-available' input vector $\ket{\tilde{\sigma}}$, respectively.
Let
\[
\tilde{M} = \frac{1}{\alpha}\ket{\tau}\bra{st} + \sqrt{1 - 1/\alpha^2}\ket{\sigma}\bra{\overline{st}} + \sum_{\ket{v_{xy}} \in M_{in}} \ket{v_{xy}}\bra{xy}.
\]
Now we define some vectors $\ket{a_x}$ for each $x \in V$:
\begin{itemize}
\item For $x \notin \{s,t\}$, $\ket{a_x} = \frac{1}{\sqrt{n-1}} \sum_{y\in V\setminus \{x\}} \ket{xy}$
\item For $x \in \{s,t\}$, $\ket{a_x} = \frac{1}{\alpha\sqrt{n-1}}\ket{st} + \sqrt{\frac{1-1/\alpha^2}{n-1}}\ket{\overline{st}} + \frac{1}{\sqrt{n-1}}\sum_{y\in V\setminus \{s,t\}} \ket{xy}$
\end{itemize}
and some vectors $\ket{b_{ij}}$ for each input vector $\ket{v_{ij}} \in M_{in}$:
\begin{itemize}
\item $\ket{b_{ij}} = \frac{1}{\sqrt{2}}(\ket{j} - \ket{i})$
\end{itemize}

Then these $\ket{a_x}$ and $\ket{b_{ij}}$ give a factorisation of the matrix $\tilde{M}$, up to a rescaling of the rows. That is, for $x \notin \{s,t\}$,
\begin{eqnarray*}
\braket{a_x|ij}\braket{x|b_{ij}} &=& \frac{1}{\sqrt{2}\sqrt{n-1}} \left(\sum_{y\in V\setminus \{x\}} \braket{xy|ij}\right) \bra{x} (\ket{j} - \ket{i}) \\
&=& \frac{1}{\sqrt{2(n-1)}} \braket{x | v_{ij}}
\end{eqnarray*}
The cases $x\in\{s,t\}$ can be verified separately, and give the desired final result, implying that $M' = \frac{1}{\sqrt{2(n-1)}}M$.
\\
\\
Now we may define $A = \sum_{u \in V} (\ket{u} \otimes \ket{a_u})\bra{u}$ and $B = \sum_{u,v \in V}(\ket{b_{uv}} \otimes \ket{uv})\bra{uv}$, and proceed as in section \ref{sec:general_imp} -- i.e. we can now implement the reflection $R_\Lambda$ by using phase estimation to reflect about the -1 eigenspace of $R_BR_A$, where $R_B$ and $R_A$ are the reflections about $\mathcal{C}(B)$ and $\mathcal{C}(A)$, respectively. This reflection is independent of the input -- that is, we reflect about the null-space of the biadjacency matrix given by the basis vectors and the set of (all possible) input vectors, which is formally described above. Our approach is then to alternate reflections about this space and the space of all \emph{available} input vectors, with the latter being achieved by the reflection operator $R_\Pi$. We can implement $R_\Pi$ by querying the input graph, and multiplying by a phase of $-1$ if the edge corresponding to a given input vector is not present. 

\paragraph{Spectral gap of \bf{$M'$} --} 
Since we use phase estimation to implement the reflection about the $-1$ eigenspace of $R_BR_A$, the efficiency of the algorithm will depend upon the spectral gap around the $-1$ eigenvalue of $R_BR_A$. By Lemma \ref{lem:spec}, this gap is determined by the spectral gap around singular value 0 of $A^\dag B = M'$. The non-zero singular values of $M'$ are the square roots of the non-zero eigenvalues of $\Delta := M'M'^\dag$. Recall that $m=|M_{\text{in}}|$ gives the total number of input vectors in the span program, and that $M' = \frac{1}{\sqrt{2(n-1)}}M = \frac{1}{\sqrt{2(n-1)}}\sum_{i \in V}\sum_{j \in [m]} \braket{i|v_j} \ket{i}\bra{j}$, where each $\ket{v_j}$ corresponds to an `ordinary' input vector of the form $\ket{y}-\ket{x}$ for some $x\neq y \in V$, or to one of the special input vectors $\tilde{\tau} = \frac{1}{\alpha}(\ket{t} - \ket{s})$ or $\tilde{\sigma} = \sqrt{1 - \frac{1}{\alpha^2}}(\ket{t} - \ket{s})$. We have that 
\[
M'M'^\dag = \sum_{i, i' \in V} \left(\frac{1}{2(n-1)} \sum_{j \in [m]}\braket{i|v_j}\braket{v_j|i'}\right)\ket{i}\bra{i'},
\]
and therefore we can compute $\Delta$ by inspecting the individual values $M'M'^\dag_{ii'}$ for different cases of $i$ and $i'$:

\begin{itemize}

\item If $i \notin \{s,t\}$, and/or $i' \notin \{s,t\}$, then we consider two cases:
\begin{enumerate}
\item $i = i'$: In this case, the term inside the brackets contributes a value of $\frac{n-1}{2(n-1)} = \frac{1}{2}$ to $\Delta_{ii'}$, since each vertex $i$ has degree $(n-1)$. Alternatively, we note that each basis vector has $(n-1)$ input vectors with which it has inner product 1, and thus $\sum_j |\braket{i|v_j}|^2 = (n-1)$.
\item $i \neq i'$: In this case, there is exactly one edge between vertices $i$ and $i'$, which contributes a term of $-\frac{1}{2(n-1)}$ to $\Delta_{ii'}$.
\end{enumerate}

\item If $i,i' \in \{s,t\}$, the situation is slightly different, but the result is the same. Again, we will deal with two cases:
\begin{enumerate}
\item $i = i'$: In this case, vertex $i$ has $n$ neighbours. $(n-2)$ of them correspond to ordinary edges, whilst the remaining two correspond to the scaled target vector $\ket{\tilde{\tau}}$, and the never-available input vector $\ket{\tilde{\sigma}}$. Then $|\braket{i|\tilde{\tau}}|^2 = 1/\alpha^2$ and $|\braket{i|\tilde{\sigma}}|^2 = (1 - 1/\alpha^2)$. So the term inside the brackets contributes a value of $(1 + (n-2) + 1/\alpha^2 - 1/\alpha^2)/2(n-1) = 1/2$ to $\Delta_{ii}$.
\item $i \neq i'$: In this case, there are two edges between vertices $i$ and $i'$ (since one must be $s$, and the other $t$). The first edge, $\ket{\tilde{\tau}}$, contributes a value of $-1/\alpha^2$, and the other, $\ket{\tilde{\sigma}}$, contributes a value of $(1/\alpha^2 - 1)$. Together, they contribute a term of $-\frac{1}{2(n-1)}$ to $\Delta_{ii'}$. 
\end{enumerate}

\end{itemize}

The result is an $n\times n$ square matrix, whose diagonal elements are $1/2$, and off-diagonal elements are $-1/2(n-1)$. By taking out a factor of $1/2(n-1)$, we obtain the Laplacian matrix for the complete graph on $n$ vertices, which has the form $nI_n - J_n$, where $I_n$ is the $n\times n$ identity matrix and $J_n$ the $n\times n$ all-ones matrix. This Laplacian has a single eigenvalue of 0, and $(n-1)$ eigenvalues of $n$. Therefore, the eigenvalues of $\Delta$ are 0 (multiplicity 1) and $n/2(n-1)$ (multiplicity $(n-1)$), and thus the non-zero singular values of $M'$ are all at least $1/\sqrt{2}$, giving us a constant spectral gap as desired. 

\paragraph{Implementing $R_A$ and $R_B$ -- }
Now it remains to show that we can implement the reflection operators $R_A$ and $R_B$ efficiently. Since these reflections decompose into local reflections, it suffices to describe operators that implement local reflections about each $\ket{a_u}$ and $\ket{b_{uv}}$. 

Recall that the algorithm works in the Hilbert space spanned by the vectors $\ket{i} \otimes \ket{j}$, where $i$ varies over the vertices in the graph, and $j$ over the input vectors and the target vector. We will describe the implementation of $R_A$ first. For all $u \notin \{s,t\}$, $\ket{u} \otimes \ket{a_u}$ is the uniform superposition of the states $\{ \ket{u} \otimes \ket{uv} : v \in V\setminus \{u\}\}$, and so the reflection is a Grover diffusion operator. 

For $u = s$, the transformation is slightly more complex. Let $F$ be the Fourier transform on the space spanned by $\{\ket{s}\otimes \ket{sv} : v \in V\setminus\{s\}\}$ that maps $\ket{s}\otimes\ket{st}$ to the uniform superposition; let $K$ be a unitary on the space spanned by $\{\ket{s}\otimes\ket{st}, \ket{s}\otimes\ket{\overline{st}}\}$ that maps $\ket{s}\otimes\ket{st}$ to $\frac{1}{\alpha}(\ket{s}\otimes\ket{st}) + \sqrt{1 - \frac{1}{\alpha^2}}(\ket{s}\otimes\ket{\overline{st}})$; and let $L$ be a unitary that multiplies the phase of all states except $\ket{s}\otimes\ket{st}$ by -1. Then the local reflection can be implemented by $FKLK^{-1}F^{-1}$. Intuitively, this is still similar to the Grover diffusion operator: the unitary $K$ acts to spread out the amplitude on the $st$ edge between the target vector $\ket{\tilde{\tau}}$ and the input vector $\ket{\tilde{\sigma}}$, and when combined with $F$ it creates the desired superposition over edges adjacent to $s$. Finally, the unitary $L$ performs the reflection, analogously to a standard diffusion operator. A very similar operation works for $u=t$. 

The implementation of $R_B$ is relatively straightforward. We apply the negated swap to all pairs $(\ket{u}\otimes\ket{uv}, \ket{v}\otimes\ket{uv})$, which maps a pair $(\ket{x}, \ket{y}) \mapsto (-\ket{y}, -\ket{x})$, for some arbitrary states $\ket{x}, \ket{y}$. This can be achieved in logarithmic time. 

To implement $R_\Pi$, the algorithm checks, for each state $\ket{i}\otimes\ket{j}$, whether input vector $j$ is available by querying the input oracle for the presence of edge $j$. If it is available, it does nothing, otherwise it negates the phase of the state. 

The states $\ket{i}\otimes\ket{j}$ can be stored using a logarithmic number of qubits. In particular, for a graph with $n$ vertices, we require $O(\log n)$ qubits.

\end{proof}

There may be cases where we do not know an upper bound on the length of the path between $s$ and $t$ ahead of time. In such situations, we may want to `guess' an upper bound on the length of the path. Provided that our guess is only wrong by at most a constant factor, the s-t connectivity algorithm of Theorem \ref{ther:belovs} will still fail with probability at most $1/10$, which follows directly from the statement of Theorem \ref{ther:belovs}. In the case that our guess is smaller than the actual length of the path (by more than a constant factor), then the algorithm may fail with a high probability.

\section{Diffusion Operators for Quantum Walk}\label{app:diff_ops}
Here we give details for implementing the diffusion operators of the quantum walk used in Section \ref{sec:reflection_ops}. In particular, we provide a proof for Lemma \ref{lem:reflect}, which we restate below.
\lemreflect*
\begin{proof}

We can query the adjacency array of each vertex in superposition. In particular, let 
\[
\ket{\psi_v} = \frac{1}{\sqrt{d_v}}\sum_{i \in [d_v]}\ket{i}\ket{f_v(i)}
\]
be the state that results from querying the adjacency array of vertex $v$ in superposition. Recall that we define $\ket{\phi_v} := \sum_{i \in [d_v]}\ket{f_v(i)}$. We want to produce $\ket{\phi_v}$ from the state $\ket{\psi_v}$. 

Let $\ket{+_v} := \frac{1}{\sqrt{d_v}} \sum_{i \in [d_v]}\ket{i}$. If we perform the $\{\ket{+_v}\bra{+_v}, I-\ket{+_v}\bra{+_v}\}$ measurement on the first register, then the first outcome is obtained with probability $1/d_v$, and in this case the second register collapses to $\ket{\phi_v}$. We want to maximise the probability of measuring $\ket{+_v}$ in the first register, in order to produce the desired state in the second register.

In fact, we can increase the probability of measuring $\ket{+_v}$ to certainty using exact amplitude amplification, which also gives us the desired state in the second register. Define 
\[
\ket{\Phi_v} := \ket{+_v} \ket{\phi_v},
\]
and two projectors
\[
P_+ := \ket{+_v}\bra{+_v} \otimes I
\]
and
\[
P_\psi := \ket{\psi_v}\bra{\psi_v}.
\]
We see that 
\begin{eqnarray*}
P_+\ket{\psi_v} &=& \frac{1}{\sqrt{d_v}}\sum_{i \in [d_v]}\left( \ket{+_v}\braket{+_v|i} \otimes \ket{f_v(i)} \right) \\
&=& \frac{1}{\sqrt{d_v}}\sum_{i \in [d_v]}\left( \frac{1}{\sqrt{d_v}}\ket{+_v} \otimes \ket{f_v(i)} \right) \\
&=& \frac{1}{\sqrt{d_v}}\ket{+_v}\otimes \frac{1}{\sqrt{d_v}}\sum_{i \in [d_v]}\ket{f_v(i)} \\
&=& \frac{1}{\sqrt{d_v}}\ket{\Phi_v} \\
&=& \ket{\Phi_v}\braket{\Phi_v|\psi_v}
\end{eqnarray*}
and
\begin{eqnarray*}
P_+\ket{\Phi_v} &=& \ket{+_v}\braket{+_v|+_v} \otimes \frac{1}{\sqrt{d_v}}\sum_{i \in [d_v]}\ket{f_v(i)} \\
&=& \ket{+_v}\otimes \frac{1}{\sqrt{d_v}}\sum_{i \in[d_v]}\ket{f_v(i)} \\
&=& \ket{+_v}\otimes \ket{\phi_v}  = \ket{\Phi_v}\\
&=& \ket{\Phi_v}\braket{\Phi_v|\Phi_v}
\end{eqnarray*}
That is, in the subspace spanned by $\{\ket{\psi_v}, \ket{\Phi_v}\}$, the projector $P_+$ acts as the projector $\ket{\Phi_v}\bra{\Phi_v}$. We can define two operators $R_+ = I - 2P_+$ and $R_\psi = I - 2P_\psi$, which, taking into account the observation noted above, are inversions about the spaces spanned by $\ket{\Phi_v}$ and $\ket{\psi_v}$, respectively (within the subspace spanned by $\{\ket{\psi_v}, \ket{\Phi_v}\}$). Thus, by alternating the two reflections, we can use the exact variant of amplitude amplification in the standard way to produce the state $\ket{\Phi_v}$ from the state $\ket{\psi_v}$. In particular, we apply $Q^m := (R_\psi R_+)^m$ to the initial state $\ket{\psi_v}$ for some integer $m$, followed by one application of a modified version of $Q$ which performs smaller rotations in order to make the algorithm exact. Since $Q$ preserves the subspace spanned by $\{\ket{\psi_v}, \ket{\Phi_v}\}$, then (by the arguments above) the algorithm will produce the desired state $\ket{\Phi_v}$. We have that $|\braket{\psi_v|\Phi_v}|^2 = \frac{1}{d_v}$, and so we can choose an $m = \Theta(\sqrt{d_v})$ to obtain the state $\ket{\Phi_v}$ in $\Theta(\sqrt{d_v})$ time \cite{brassard2002quantum}.
\\
\\
In other words, we can produce a state $\ket{v}\ket{+_v}\ket{\phi_v}$ from an initial state $\ket{v}\ket{0}\ket{0}$ in $\Theta(\sqrt{d_v})$ time. We can then uncompute the value in the second register, giving us $\ket{v}\ket{\phi_v}\ket{0}$ (where we have swapped the final two registers for clarity). Let $U$ be the operator that maps the state $\ket{v}\ket{0}\ket{0}$ to $\ket{v}\ket{\phi_v}\ket{0}$. Then the diffusion operator $D_v$ may be implemented by $US_0U^{-1}$, where $S_0$ changes the sign of the amplitude if and only if the final two registers are in the all zero state. That is, it performs the map $S_0\ket{v}\ket{0}\ket{0} \mapsto -\ket{v}\ket{0}\ket{0}$ for all $v \in A$. More precisely, it implements the reflection
\[
S_0 = I - 2\sum_{v\in A}\ket{v}\bra{v} \otimes \ket{0}\bra{0} \otimes \ket{0}\bra{0}.
\]
Therefore,
\begin{eqnarray*}
US_0U^{-1} &=& U( I - 2\sum_{v\in A}\ket{v}\bra{v} \otimes \ket{0}\bra{0} \otimes \ket{0}\bra{0})U^{-1} \\
&=& I - 2\sum_{v\in A} \ket{v}\bra{v} \otimes \ket{\phi_v}\bra{\phi_v} \otimes \ket{0}\bra{0} \\
&=& R_A
\end{eqnarray*}
by the definition of $R_A$.
\\
\\
Since we are applying $U$ to all vertices in superposition, it will be necessary to clarify how the amplitude amplification part of $U$ can be applied to a superposition over vertices. In order to produce the desired state for some vertex $v$, amplitude amplification needs to be performed for a number of iterations that depends upon the degree of that vertex. Since different vertices will have different degrees, the number of iterations required will vary between vertices. To address this, we can define an algorithm $AA(v, d_v, d_m)$, where $d_m$ is the maximum degree of any vertex. The algorithm will perform amplitude amplification for the correct number of iterations for vertex $v$, and then will do nothing (i.e. apply the identity operator) for the remaining iterations. In this way, the amplitude amplification routines stop and wait for the vertex with the largest degree, and thus the amplitude amplification step can be applied to all vertices in superposition, requiring time $O(\sqrt{d_m})$.

Since the diffusion operators required for the vertices $s$ and $t$ are different, it is worth discussing their implementations separately. The diffusion operator for vertex $t$ is easy to implement, since it is the identity. Vertex $s$ has a more complicated operator; however, all we need to change is the operation that maps the state $\ket{s}\ket{0}\ket{0}$ to the state $\ket{s}\ket{\psi_s}$. For all $v \notin \{s,t\}$, we simply produce a uniform superposition over the neighbours of $v$. For $s$, as discussed above, we have an additional term corresponding to an additional edge incident to vertex $s$, and therefore we need to be able to produce the state 
\[
\ket{+_s} = \frac{1}{\sqrt{1 + d_sCd}}\ket{d_s} + \sqrt{\frac{Cd}{1 + d_sCd}} \sum_{i=0}^{d_s-1} \ket{i}
\]
in the second register, which will be used to produce $\ket{\phi_s}$. In order to do this, let $K$ be a unitary operator on the space spanned by $\{\ket{0}, \ket{d_s}\}$ that maps $\ket{0}$ to $\frac{1}{\sqrt{1+d_sCd}}\ket{d_s} + \sqrt{\frac{d_sCd}{1+d_sCd}}\ket{0}$. Then let $F$ be the Fourier transform on the space spanned by $\{\ket{i} : i \in \{0..d_s-1\}\}$ that maps $\ket{0}$ to the uniform superposition. Then the required state can be produced by applying $FK$ to the state $\ket{0}$. We can then proceed as in the more general case to implement the local reflection.
\\
\\
In general, the time taken to implement the operators $R_A$ and $R_B$ will depend on the degrees of the vertices in the graph. In particular, if the maximum degree of any one vertex is $d_m$, then we will have to use at most $O(\sqrt{d_m})$ iterations of amplitude amplification in order to implement the local reflections in parallel. Therefore $O(\sqrt{d_m})$ queries are required to implement $R_AR_B$, and the time complexity is the same up to $\text{polylog}$ factors in $n$. 
\end{proof}

\bibliographystyle{abbrv}
\bibliography{references.bib}

\end{document}